\DeclareMathOperator*{\argmin}{arg\,min}
\DeclareMathOperator{\KL}{KL}
\theoremstyle{plain}
\newtheorem{theorem}{Theorem}[section]
\newtheorem{proposition}[theorem]{Proposition}
\theoremstyle{definition}
\newtheorem{definition}[theorem]{Definition}
\theoremstyle{remark}
\title{Policy Space Diversity for Non-Transitive Games}
\newcommand*{\affaddr}[1]{#1} 
\newcommand*{\affmark}[1][*]{$^{#1}$}
\author{
\textbf{Jian Yao}\affmark[1*],
\textbf{Weiming Liu} \affmark[1*],
\textbf{Haobo Fu}\affmark[1]\thanks{Equal contribution. Correspondence to: Haobo Fu (haobofu@tencent.com).}~~,
\textbf{Yaodong Yang}\affmark[2],\\
\textbf{Stephen McAleer}\affmark[3],
\textbf{Qiang Fu}\affmark[1],
\textbf{Wei Yang}\affmark[1]\\
\affaddr{\affmark[1]{Tencent AI Lab, Shenzhen, China}} \\ 
\affaddr{\affmark[2]{Peking University, Beijing, China}} \\
\affaddr{\affmark[3]{Carnegie Mellon University}} \\
}
\begin{document}
\maketitle

\begin{abstract}

Policy-Space Response Oracles (PSRO) is an influential algorithm framework for approximating a Nash Equilibrium (NE) in multi-agent non-transitive games. Many previous studies have been trying to promote policy diversity in PSRO. A major weakness in existing diversity metrics is that a more diverse (according to their diversity metrics) population does not necessarily mean (as we proved in the paper) a better approximation to a NE. To alleviate this problem, we propose a new diversity metric, the improvement of which guarantees a better approximation to a NE. Meanwhile, we develop a practical and well-justified method to optimize our diversity metric using only state-action samples. By incorporating our diversity regularization into the best response solving in PSRO, we obtain a new PSRO variant, \textit{Policy Space Diversity} PSRO (PSD-PSRO). We present the convergence property of PSD-PSRO. Empirically, extensive experiments on various games demonstrate that PSD-PSRO is more effective in producing significantly less exploitable policies than state-of-the-art PSRO variants. The experiment code is available at  https://github.com/nigelyaoj/policy-space-diversity-psro.

\end{abstract}

\section{Introduction}
\label{sec_intros}

Most real-world games demonstrate strong non-transitivity \cite{czarnecki2020real}, where the winning rule follows a cyclic pattern (e.g., the strategy cycle in Rock-Paper-Scissors) \cite{candogan2011flows,balduzzi2018mechanics}. A common objective in solving non-transitive games is to find a Nash Equilibrium (NE), which has the best worst-case performance in the whole policy space. Traditional algorithms, like simple self-play, fail to converge to a NE in games with strong non-transitivity \cite{lanctot2017unified}. Recently, many game-theoretic methods have been proposed to approximate a NE in such games. For example, Counterfactual Regret Minimization (CFR) \cite{zinkevich2007regret} minimizes the so-called counterfactual regret. Neural fictitious self play \cite{heinrich2015fictitious, heinrich2016deep} extends the classical game-theoretic approach, Fictitious Play (FP) \cite{brown1951iterative}, to larger games using Reinforcement Learning (RL) to approximate a Best Response (BR).
Another well-known algorithm is Policy-Space Response Oracles (PSRO) \cite{lanctot2017unified}, which generalizes the double oracle approach \cite{mcmahan2003planning} by adopting a RL subroutine to approximate a BR. 


Improving the performance of PSRO on approximating a NE is an active research topic, and many PSRO variants have been proposed so far, which generally fall into three categories. The first category \cite{mcaleer2020pipeline,smith2020iterative,feng2021neural,liu2022neupl} aims to improve the training efficiency at each iteration. For instance, pipeline-PSRO \cite{mcaleer2020pipeline} trains multiple BRs in parallel at each iteration. Neural population learning \cite{liu2022neupl} enables fast transfer learning across policies via representing a population of policies
within a single conditional model. The second category \cite{mcaleer2022anytime,zhou2022efficient} incorporates no-regret learning into PSRO, which solves an unrestricted-restricted game with a no-regret learning method to guarantee the decrease of \textit{exploitability} of the meta-strategy across each iteration. The third category \cite{balduzzi2019open,perez2021modelling,liu2021towards,liu2022unified} promotes policy diversity in the population, which is usually implemented by incorporating a diversity regularization term into the BR solving in the original PSRO. 


Despite achieving promising improvements over the original PSRO, the theoretical reason why the diversity metrics in existing diversity-enhancing PSRO variants \cite{balduzzi2019open,perez2021modelling,liu2021towards,liu2022unified} help PSRO in terms of approximating a NE is unclear. More specifically, those diversity metrics are `justified' in the sense that adding the corresponding diversity-regularized BR strictly enlarges the \textit{gamescape}. However, as we prove and demonstrate later in the paper, a population with a larger \textit{gamescape} is neither a sufficient nor a necessary condition for a better approximation (we will define the precise meaning later) to a NE. One fundamental reason is that \textit{gamescape} is a concept that varies significantly according to the choice of opponent policies. In contrast, \textit{exploitability} (the distance to a NE) measures the  worst-case performance that is invariant to the choice of opponent policies.


In this paper, we seek for a new and better justified diversity metric that improves the approximation of a NE in PSRO. To achieve this, we introduce a new concept, named \textit{Population Exploitability} (PE), to quantify the strength of a population. The PE of a population is the optimal \textit{exploitability} that can be achieved by selecting a policy from its \textit{Policy Hull} (PH), which is simply a complete set of polices that are convex combinations of individual polices in the population. In addition, we show that a larger PH means a lower PE.  Based on these insights, we make the following contributions:
\begin{itemize}[leftmargin=*]
    \item We point out a major and common weakness of existing diversity-enhancing PSRO variants: their goal of enlarging the \textit{gamescape} of the population in PSRO is somewhat deceptive to the extent that it can lead to a weaker population in terms of PE. In other words, a more diverse (according to their diversity metrics) population  $\Rightarrow$ a larger \textit{gamescape} $\nRightarrow$ closer to a full game NE.
    \item We develop a new diversity metric that encourages the enlargement of a population's PH. In addition, we develop a practical and well-justified method to optimize our diversity metric using only state-action samples. We then incorporate our diversity metric (as a regularization term) into the BR solving in the original PSRO and obtain a new algorithm: Policy Space Diversity PSRO (PSD-PSRO). Our method PSD-PSRO establishes the causality: a more diverse (according to our diversity metric) population  $\Rightarrow$ a larger PH $\Rightarrow$ a lower PE $\Rightarrow$ closer to a full game NE.  
    \item We prove that a full game NE is guaranteed once PSD-PSRO is converged. In contrast, it is not clear, in other state-of-the-art diversity-enhancing PSRO variants \cite{balduzzi2019open,perez2021modelling,liu2021towards,liu2022unified}, whether a full game NE is found once they are converged in terms of their optimization objectives. Notably, PSRO$_{rN}$ \cite{balduzzi2019open} is not guaranteed to find a NE once converged \cite{mcaleer2020pipeline}.
\end{itemize}

\section{Notations and Preliminary}

\subsection{Extensive-form Games, NE, and Exploitability}
Extensive-form games are used to model sequential interaction involving multiple agents, which can be defined by a tuple $\langle \mathcal{N}, \mathcal{S}, P, \mathcal{A}, u \rangle$. $\mathcal{N}=\{1,2\}$ denotes the set of players (we focus on the two-player zero-sum games). 
$\mathcal{S}$ is a set of information states for decision-making. Each information state node $s \in \mathcal{S}$ includes a set of actions $\mathcal{A}(s)$ that lead to subsequent information states.
The player function $P: \mathcal{S} \rightarrow \mathcal{N} \cup \{c\}$, with $c$ denoting chance, determines which player takes action in $s$. We use $s_i$, $\mathcal{S}_i = \{s\in \mathcal{S}|P(s) = i\}$, and $\mathcal{A}_i = \cup_{s\in \mathcal{S}_i}\mathcal{A}(s)$ to denote player $i$'s state, set of states, and set of actions respectively. We consider games with \textit{perfect recall}, where each player remembers the sequence of states to the current state. 


A player's \textit{behavioral strategy} is denoted by $\pi_i(s) \in \Delta(\mathcal{A}(s)), \forall s \in \mathcal{S}_i$, and 
$\pi_i(a|s)$ is the probability of player $i$ taking action $a$ in $s$. 
A \textit{strategy profile} $\pi = (\pi_1,\pi_2)$ is a pair of strategies for each player, and we use $\pi_{-i}$ to refer to the strategy in $\pi$ except $\pi_i$. 
$u_i(\pi)=u_i(\pi_i, \pi_{-i})$ denotes the payoff for player $i$ when both players follow $\pi$. 
The BR of player $i$ to the opponent's strategy $\pi_{-i}$ is denoted by $\mathcal{BR}(\pi_{-i}) =\arg \max_{\pi'_i}u_i(\pi'_i,\pi_{-i})$.
The \textit{exploitability} of strategy profile $\pi$ is defined as:
\begin{equation}
\label{equ:exp}
    \mathcal{E}(\pi)=\frac{1}{2} \sum_{i\in \mathcal{N}}[\max_{\pi'_i}u_i(\pi'_i, \pi_{-i}) - u_i(\pi_i, \pi_{-i})].
\end{equation}
When $\mathcal{E}(\pi)=0$, $\pi$ is a NE of the game.

\subsection{Meta-Games, PH, and PSRO}
Meta-games are introduced to represent games at a higher level. Denoting a population of mixed strategies for player $i$ by 
$\Pi_i:=\{\pi_i^1, \pi_i^2, ...\}$, the payoff matrix on the joint population $\Pi = \Pi_{i} \times \Pi_{-i}$ is denoted by $\mathbf{M}_{\Pi_i,\Pi_{-i}}$, where $\mathbf{M}_{\Pi_i,\Pi_{-i}}[j,k]:=u_i(\pi_i^j, \pi_{-i}^k)$.  The meta-game on $\Pi$ and $\mathbf{M}_{\Pi_i,\Pi_{-i}}$ is simply a normal-form game where selecting an action means choosing which $\pi_i$ to play for player $i$. Accordingly, 
we use $\sigma_i$ ($\sigma_i$ is called a meta-strategy and could be, e.g., playing $[\pi_i^1, \pi_i^2]$ with probability $[0.5, 0.5]$) to denote a mixed strategy over $\Pi_i$, i.e., $\sigma_i \in \Delta_{\Pi_i}$.
A meta-policy $\sigma_i$ over $\Pi_i$ can be viewed as a convex combination of polices in $\Pi_i$, and
we define the PH of a population $\mathcal{H}(\Pi_i)$ as the set of all convex combinations of the policies in $\Pi_i$.
Meta-games are often open-ended in the sense that there exist an infinite number of mixed strategies and that new policies will be successively added to $\Pi_i$ and $\Pi_{-i}$ respectively. We give a summary of notations in Appendix \ref{Apx:notation}.

PSRO operates on meta-games and consists of two components: an oracle and a meta-policy solver. At each iteration $t$, PSRO maintains a population of policies, denoted by $\Pi^t_i$, for each player $i$. The joint meta-policy solver first computes a NE meta-policy $\sigma^{t}$ on the restricted meta-game represented by $\mathbf{M}_{\Pi^t_i,\Pi^t_{-i}}$. Afterwards, for each player $i$, the oracle computes an approximate BR (i.e., $\pi_i^{t+1}$) against the meta-policy $\sigma_{-i}^{t}$: $\pi_i^{t+1} \in \mathcal{BR}(\sigma_{-i}^{t})$. The new policy $\pi_i^{t+1}$ is then added to its population ($\Pi^{t+1}_i=\Pi^t_i \cup \{\pi_i^{t+1}\}$), and the next iteration starts. In the end, PSRO outputs a meta-policy NE on the final joint population as an approximation to a full game NE.





\subsection{Previous Diversity Metrics for PSRO}
 \textbf{Effective diversity} \cite{balduzzi2019open} measures the variety of effective strategies (strategies with support under a meta-policy NE) and uses a rectifier to focus on how these effective strategies beat each other.
Let $(\sigma_i^*, \sigma_{-i}^*)$ denote a meta-policy NE on $\mathbf{M}_{\Pi_i ,\Pi_{-i}}$. The \textit{effective diversity} of $\Pi_i$ is:
\begin{equation}
    {\rm Div}(\Pi_i) = {\sigma_i^*}^T 
    \lfloor \mathbf{M}_{\Pi_i, \Pi_{-i}} \rfloor_+
    {\sigma_{-i}^*},\ 
\end{equation}
where $\lfloor x \rfloor_+:=x\ \rm{if}\ x \geq 0
    \ \rm{else}\ 0$.

\textbf{Expected Cardinality} \cite{perez2021modelling}, inspired by the determinantal point processes \cite{macchi1977fermion}, measures the diversity of a population $\Pi_i$ as the expected cardinality of the random set $\mathbf{Y}$ sampled according to $det(\mathcal{L}_{\Pi})$:
\begin{equation}
    {\rm Div}(\Pi_i) = 
    \mathbb{E}_{\mathbf{Y}\sim 
    \mathbb{P}_{\mathcal{L}_{\Pi}}} 
    [|\mathbf{Y}|] = 
    \rm{Tr}(\mathbf{I} -
    (\mathcal{L}_{\Pi} + \mathbf{I})^{-1}),
\end{equation}
where $|\mathbf{Y}|$ is the cardinality of $\mathbf{Y}$, 
and 
$\mathcal{L}_{\Pi} = \mathbf{M}_{\Pi_i, \Pi_{-i}} {\mathbf{M}^T}_{\Pi_i,\Pi_{-i}}$.    

\textbf{Convex Hull Enlargement} \cite{liu2021towards} builds on  the idea of enlarging the convex hull of all row vectors in the payoff matrix:
\begin{equation}
    {\rm Div}(\Pi_i\cup\{\pi'_i\}) = \min_
    {\mathbf{1}^T \beta=1,
    \beta \geq 0}
    ||\mathbf{M}_{\Pi_i, \Pi_{-i}}^T 
    \beta - \mathbf{m} ||,
\end{equation}
where $\pi'_i$ is the new strategy to add, and 
$\mathbf{m}$ is the payoff vector of policy $\pi'_i$ against each opponent policy in $\Pi_{-i}$: $\mathbf{m}[j]=u_i(\pi'_i, \pi^j_{-i})$.



 \textbf{Occupancy Measure Mismatching} \cite{liu2021towards} considers the
 state-action distribution $\rho_{\pi}(s, a)$ induced by a joint policy $\pi$. When considering adding a new policy $\pi'_i$, the corresponding diversity metric is:
\begin{equation}
\label{equ:div_occ}
    {\rm Div}(\Pi_i\cup\{\pi'_i\}) = D_f(\rho_{(\pi_i', \sigma^*_{-i})}||
    \rho_{(\sigma^*_i, \sigma^*_{-i})}),
\end{equation}
where $\pi'_i$ is the new policy to add; $(\sigma^*_{i}, \sigma^*_{-i})$ is a meta-policy NE on $\mathbf{M}_{\Pi_i, \Pi_{-i}}$, and $D_f$ is a general $f$-divergence between two distributions. It is worth noting that Equation \ref{equ:div_occ} only considers the difference between two policies ($\pi'_i$ and $\sigma^*_i$), instead of $\pi'_i$ and $\Pi_i$. In practice, this diversity metric is used together with the \textbf{convex hull enlargement} in \cite{liu2021towards}.


\textbf{Unified Diversity Measure} \cite{liu2022unified} offers a unified view on existing diversity metrics and is defined as:
\begin{equation}
    {\rm Div}(\Pi_i) = \sum_{m=1}^{|\Pi_i|} f(\lambda_m),
\end{equation}
where $f$ takes different forms for different existing diversity metrics; $\lambda_m$ is the eigenvalues of 
$[K(\phi_m, \phi_n)]_{|\Pi_i| \times |\Pi_i|}$; $K(\cdot,\cdot)$ is a predefined kernel function; and $\phi_m$ is the strategy feature for the $m$-th policy in $\Pi_i$. It is worth mentioning that only payoff vectors in $\mathbf{M}_{\Pi_i,\Pi_{-i}}$ were investigated for the strategy feature of the new diversity metric proposed in \cite{liu2022unified}.



\section{A Common Weakness of Existing Diversity-Enhancing PSRO Variants} 
\label{sec:common_weakness}
As shown in last section, all previous diversity-enhancing PSRO variants \cite{balduzzi2019open,perez2021modelling,liu2021towards,liu2022unified} try to enlarge the \textit{gamescape} of $\Pi_i$, which is the convex hull of the rows in the empirical payoff matrix:
\begin{equation}
     {\rm \mathcal{GS}} (\Pi_i | \Pi_{-i}) \notag := 
     \{ \sum_j \alpha_j \mathbf{m}_j :
    \bm{\alpha} \geq 0, 
    \bm{\alpha}^T \bm{1} = 1
    \},
\end{equation}
where $\mathbf{m}_j$ is the $j$-th row vector in ${\mathbf{M}_{\Pi_i,\Pi_{-i}}}$. 
However, the \textit{gamescape} of a population depends on the choice of opponent policies, and two policies with the same payoff vector are not necessarily the same. Moreover, enlarging the \textit{gamescape} without careful tuning would encourage the current player to deliberately lose to the opponent to get `diverse' payoffs. We suspect this might be the reason why the optimization of the \textit{gamescape} is activated later in the training procedure in \cite{liu2021towards}. More importantly, it is not theoretically clear from previous diversity-enhancing PSRO variants why enlarging the \textit{gamescape} would help in approximating a full game NE in PSRO.

To rigorously answer the question whether a diversity metric is helpful in approximating a NE in PSRO, we need a performance measure to monitor 
the progress of PSRO across iterations in terms of finding a full game NE. In other words, we need to quantify the strength of a population of policies. Previously, the \textit{exploitability} of a meta NE of the joint population is usually employed to monitor the progress of PSRO. Yet, as demonstrated in \cite{mcaleer2022anytime}, this \textit{exploitability} may increase after an iteration. Intuitively, a better alternative is the \textit{exploitability} of the least exploitable mixed strategy supported by a population. We define this \textit{exploitability} as the \textit{population exploitability}: 
\begin{definition}
\label{def:pop_exp}
For a joint population $\Pi=\Pi_i\times\Pi_{-i}$,
let $(\sigma^*_i, \sigma^*_{-i})$ be a meta NE on
$\mathbf{M}_{\Pi_i, \Pi_{-i}}$.
The \textit{relative population performance} of $\Pi_i$ against $\Pi_{-i}$ \cite{balduzzi2019open} is:
\begin{equation}
    \mathcal{P}_i(\Pi_i, \Pi_{-i})={\sigma^*_i}^T \mathbf{M}_{\Pi_i, \Pi_{-i}} {\sigma^*_{-i}}.
\end{equation}
The \textit{population exploitability} of the joint population $\Pi$ is defined as:
\begin{equation}
    \mathcal{PE}(\Pi) = 
    \frac{1}{2} \sum_{i=1,2}
    \max_{\Pi'_{i} \subseteq \Omega_{i}} \mathcal{P}_i(\Pi'_i, \Pi_{-i}), 
\end{equation}
where $\Omega_i$ is the full set of all possible mixed strategies of player $i$. 
\end{definition}
We notice that PE is equal to the sum of negative \textit{population effectivity} defined in \cite{liu2021towards}. Yet, we prefer PE as it is more of a natural extension to \textit{exploitability} in Equation \ref{equ:exp}. Some properties of PE and its relation to PH are presented in the following.
\begin{proposition}
\label{prop:PE}
Considering a joint population $\Pi=\Pi_i\times\Pi_{-i}$, we have:
\begin{enumerate}[leftmargin=*]
    \item $\mathcal{PE}(\Pi) \geq 0$, $\forall~\Pi$.

    \item For another joint population $\Pi'$,
    if $\mathcal{H}(\Pi_i) \times \mathcal{H}(\Pi_{2}) \subseteq \mathcal{H}(\Pi'_i) \times \mathcal{H}(\Pi'_{-i})$,
    then $\mathcal{PE}(\Pi) \geq \mathcal{PE}(\Pi')$.
    
    \item If $\Pi_i=\{\pi_i\}$ and $\Pi_{-i}=\{\pi_{-i}\}$, then $\mathcal{PE}(\Pi)=\mathcal{E}(\pi)$, where $\pi=(\pi_i, \pi_{-i})$.
    
    \item $\exists \pi=(\pi_i, \pi_{-i}) \in \mathcal{H}(\Pi_i) \times \mathcal{H}(\Pi_{-i})$ s.t. $\mathcal{E}(\pi) = \mathcal{PE}(\Pi) = \min_{\pi'\in \mathcal{H}(\Pi_i) \times \mathcal{H}(\Pi_{-i})} \mathcal{E}(\pi')$.

    \item Let $(\sigma_i^*, \sigma_{-i}^*)$ denote an arbitrary NE of the full game. 
    $\mathcal{PE}(\Pi) = 0$ if and only if $(\sigma_i^*, \sigma_{-i}^*) \in \mathcal{H}(\Pi_i) \times \mathcal{H}(\Pi_{-i})$.
    
\end{enumerate}
\end{proposition}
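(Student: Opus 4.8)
The plan is to first establish a single master identity from which items 1, 2, 3, and 5 follow almost immediately, namely the claim in item 4 that $\mathcal{PE}(\Pi) = \min_{\pi \in \mathcal{H}(\Pi_i)\times\mathcal{H}(\Pi_{-i})} \mathcal{E}(\pi)$, together with the fact that this minimum is attained. I would treat item 4 as the technical core and prove it first, then harvest the other four items from it.

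To prove item 4, I would begin by simplifying the inner maximization in the definition of $\mathcal{PE}$. Because the meta-game is two-player zero-sum, the relative population performance $\mathcal{P}_i(\Pi'_i, \Pi_{-i})$ is exactly the value of the restricted game $\mathbf{M}_{\Pi'_i, \Pi_{-i}}$, which is monotone (weakly) increasing in the strategy set available to player $i$; hence the maximum over $\Pi'_i \subseteq \Omega_i$ is attained at $\Pi'_i = \Omega_i$ and equals $v_i := \max_{\pi_i} \min_{\sigma_{-i}\in\mathcal{H}(\Pi_{-i})} u_i(\pi_i,\sigma_{-i})$. Applying the minimax theorem (the set $\mathcal{H}(\Pi_{-i})$ is convex and compact and the payoff is bilinear) I would rewrite this as $\min_{\sigma_{-i}\in\mathcal{H}(\Pi_{-i})}\max_{\pi_i} u_i(\pi_i,\sigma_{-i})$. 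Next I would rewrite exploitability in zero-sum form, $\mathcal{E}(\pi)=\tfrac12[\max_{\pi_1'}u_1(\pi_1',\pi_2)-\min_{\pi_2'}u_1(\pi_1,\pi_2')]$, and observe that over the product hull the two terms decouple, the first depending only on $\pi_2$ and the second only on $\pi_1$, so that minimizing $\mathcal{E}$ over $\mathcal{H}(\Pi_1)\times\mathcal{H}(\Pi_2)$ separates into an independent min and max that coincide term-by-term with $\tfrac12(v_1+v_2)=\mathcal{PE}(\Pi)$ after substituting $u_2=-u_1$. Existence of the minimizer then follows since $\mathcal{H}(\Pi_1)\times\mathcal{H}(\Pi_2)$ is compact (a product of convex hulls of finitely many points) and $\mathcal{E}$ is continuous.

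With item 4 in hand the remaining properties are short. For item 1 I would bound the two decoupled terms of $\mathcal{PE}$ against the full-game value $v^*$: restricting the opponent's mixture to $\mathcal{H}(\Pi_{-i})$ can only raise the min-max term above $v^*$ and lower the max-min term below $v^*$, so their difference is nonnegative; equivalently $\mathcal{E}\geq 0$ pointwise gives $\mathcal{PE}=\min\mathcal{E}\geq 0$. Item 2 is immediate from item 4, since a larger joint policy hull is a superset over which the same continuous $\mathcal{E}$ is minimized, so the minimum can only decrease. Item 3 is the degenerate case where each hull is a single point, forcing the minimization over $\mathcal{H}(\Pi)$ to be evaluated at the unique profile $\pi=(\pi_i,\pi_{-i})$. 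For item 5, the backward direction uses that a full-game NE $\pi^*$ has $\mathcal{E}(\pi^*)=0$, so if $\pi^*\in\mathcal{H}(\Pi)$ then $\mathcal{PE}(\Pi)=\min\mathcal{E}\leq 0$, which with item 1 gives $\mathcal{PE}(\Pi)=0$; the forward direction invokes item 4 to extract a minimizer $\pi\in\mathcal{H}(\Pi)$ with $\mathcal{E}(\pi)=\mathcal{PE}(\Pi)=0$, and $\mathcal{E}(\pi)=0$ means $\pi$ is a full-game NE lying in the hull.

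The main obstacle I anticipate is item 4, specifically the bookkeeping needed to (i) justify that the supremum over $\Pi'_i$ reduces cleanly to the unrestricted best response against $\mathcal{H}(\Pi_{-i})$, and (ii) verify that the minimax swap is legitimate and that, after summing the two players' terms with $u_2=-u_1$, the resulting expression matches the separated minimization of $\mathcal{E}$ exactly rather than merely up to an inequality. I would also be careful to read the ``arbitrary NE'' phrasing in item 5 as an existence statement: what item 4 delivers is the equivalence of $\mathcal{PE}(\Pi)=0$ with $\mathcal{H}(\Pi_i)\times\mathcal{H}(\Pi_{-i})$ containing some full-game NE.
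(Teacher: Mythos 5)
Your proof is correct, and it takes a genuinely different route from the paper's. The paper proceeds sequentially: item 1 directly from the definition of $\mathcal{PE}$; item 2 from a monotonicity property of relative population performance imported from Balduzzi et al.; item 3 from the resulting identity $\mathcal{PE}(\Pi)=\tfrac12\bigl(\mathcal{P}_i(\Omega_i,\Pi_{-i})-\mathcal{P}_i(\Pi_i,\Omega_{-i})\bigr)$; item 4 by taking Nash equilibria $(\sigma_i,\pi_{-i})$ and $(\pi_i,\sigma_{-i})$ of the two semi-restricted games $\mathbf{M}_{\Omega_i,\Pi_{-i}}$ and $\mathbf{M}_{\Pi_i,\Omega_{-i}}$, pairing their restricted components into $\pi=(\pi_i,\pi_{-i})$, verifying $\mathcal{E}(\pi)=\mathcal{PE}(\Pi)$ via the best-response conditions, and then invoking item 2 for minimality; and item 5 from items 1, 2, and 4. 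You invert this: you make item 4 --- the identity $\mathcal{PE}(\Pi)=\min_{\pi\in\mathcal{H}(\Pi_i)\times\mathcal{H}(\Pi_{-i})}\mathcal{E}(\pi)$, with attainment --- the master lemma, proved by reducing the inner maximum to the unrestricted strategy set, swapping max and min via the minimax theorem, and observing that exploitability decouples over the product hull into a term depending only on $\pi_1$ plus a term depending only on $\pi_2$; items 1, 2, 3, 5 then each follow in one line. The two derivations of item 4 are mathematical cousins (the existence of the paper's semi-restricted NEs is itself an instance of the minimax theorem, and their restricted components are exactly your two decoupled minimizers), but your organization buys self-containedness --- item 2 becomes a trivial superset-minimization statement rather than resting on the cited monotonicity lemma --- and makes the meaning of PE transparent as least exploitability over the hull. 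What the paper's order buys is that items 1--3 stand independently of the compactness and attainment bookkeeping your master identity requires. Finally, your decision to read the ``arbitrary NE'' phrasing of item 5 as an existence statement (PE $=0$ iff the hull contains \emph{some} full-game NE) is exactly what the paper's own proof establishes, so no discrepancy arises there.
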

The proof is in Appendix \ref{Apx:pf_PE}.  Once we use PE to monitor the progress of PSRO, we have:
\begin{proposition}
\label{prop:PSRO}
The PE of the joint population $\Pi^t$ at each iteration $t$ in PSRO is monotonically decreasing and will converge to $0$ in finite iterations for finite games. Once $\mathcal{PE}(\Pi^T) = 0$, a meta NE on $\Pi^T$ is a full game NE. 
\end{proposition}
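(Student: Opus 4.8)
The plan is to prove the three claims in turn, using only Proposition \ref{prop:PE}. For monotonicity I would note that PSRO only ever appends policies, so $\Pi_i^t\subseteq\Pi_i^{t+1}$ and hence $\mathcal{H}(\Pi_i^t)\subseteq\mathcal{H}(\Pi_i^{t+1})$ for each player. Taking products gives $\mathcal{H}(\Pi_i^t)\times\mathcal{H}(\Pi_{-i}^t)\subseteq\mathcal{H}(\Pi_i^{t+1})\times\mathcal{H}(\Pi_{-i}^{t+1})$, so the second property of Proposition \ref{prop:PE} yields $\mathcal{PE}(\Pi^t)\ge\mathcal{PE}(\Pi^{t+1})$, while the first property gives $\mathcal{PE}(\Pi^t)\ge 0$. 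Hence $\{\mathcal{PE}(\Pi^t)\}$ is non-increasing and bounded below by $0$.

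The step I expect to be the main obstacle is finite termination, which I would reduce to the claim that every iteration with $\mathcal{PE}(\Pi^t)>0$ injects a genuinely new pure strategy into some population. Let $\sigma^t=(\sigma_i^t,\sigma_{-i}^t)$ be the meta NE the solver returns. Since $\sigma^t\in\mathcal{H}(\Pi_i^t)\times\mathcal{H}(\Pi_{-i}^t)$, the fourth property of Proposition \ref{prop:PE} gives $\mathcal{E}(\sigma^t)\ge\mathcal{PE}(\Pi^t)>0$, so $\sigma^t$ is exploitable and at least one player $i$ satisfies $u_i(\pi_i^{t+1},\sigma_{-i}^t)=\max_{\pi_i'}u_i(\pi_i',\sigma_{-i}^t)>u_i(\sigma_i^t,\sigma_{-i}^t)$ for the oracle's best response $\pi_i^{t+1}$. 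I would then rule out $\pi_i^{t+1}\in\mathcal{H}(\Pi_i^t)$: any convex combination of $\Pi_i^t$ scores at most $\max_{\sigma_i\in\Delta_{\Pi_i^t}}u_i(\sigma_i,\sigma_{-i}^t)=u_i(\sigma_i^t,\sigma_{-i}^t)$ against $\sigma_{-i}^t$, where the equality holds because $\sigma_i^t$ is a restricted best response, contradicting the strict inequality. Taking the oracle to return a pure best response (one always exists in a finite game), the population thus gains a pure strategy it did not previously contain.

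Because a finite extensive-form game with perfect recall has only finitely many deterministic strategies, a new pure strategy can be added only finitely often; combined with monotonicity, if $\mathcal{PE}$ never reached $0$ the populations would grow without bound, a contradiction, so $\mathcal{PE}(\Pi^T)=0$ for some finite $T$. For the terminal guarantee, once $\mathcal{PE}(\Pi^T)=0$ the fifth property of Proposition \ref{prop:PE} places a full-game NE inside $\mathcal{H}(\Pi_i^T)\times\mathcal{H}(\Pi_{-i}^T)$, and the fourth property supplies a profile $\pi$ in this hull product with $\mathcal{E}(\pi)=\mathcal{PE}(\Pi^T)=0$, i.e. a meta-policy over $\Pi^T$ that is a full-game NE. The one point I would be careful about here is that this is an existence statement about the minimal-exploitability meta NE: a degenerate restricted game can admit other meta NEs that remain exploitable in the full game, so I would explicitly select the minimizer from property four rather than reason about an arbitrary meta NE.
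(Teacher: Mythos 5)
Your proof is correct, and its engine is the same as the paper's proof in Appendix \ref{Apx:pf_prop_PSRO}: while $\mathcal{PE}(\Pi^t)>0$ the meta NE $\sigma^t$ is exploitable, any best response of the improving player strictly beats everything in $\mathcal{H}(\Pi_i^t)$ against $\sigma_{-i}^t$ (because $\sigma_i^t$ is already restricted-optimal), hence the oracle's pure best response is a genuinely new pure strategy, and finiteness of pure strategies forces $\mathcal{PE}$ to reach zero; monotonicity follows from properties 1--2 of Proposition \ref{prop:PE} in both treatments. The genuine divergence is in the terminal claim. The paper works at the iteration $T$ where hull growth stalls: both returned best responses then lie inside the hulls, so the sandwich $u(\sigma_i^T,\sigma_{-i}^T)\ge u(\mathcal{BR}(\sigma_{-i}^T),\sigma_{-i}^T)\ge u(\sigma_i^T,\sigma_{-i}^T)$ shows that the \emph{specific} meta NE computed by PSRO is a full-game NE, and $\mathcal{PE}(\Pi^T)=0$ is deduced from that. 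You argue in the opposite direction: starting from $\mathcal{PE}(\Pi^T)=0$, properties 4 and 5 of Proposition \ref{prop:PE} exhibit a zero-exploitability profile inside $\mathcal{H}(\Pi_i^T)\times\mathcal{H}(\Pi_{-i}^T)$, which is automatically both a meta NE of the restricted game and a full-game NE. Your direction matches the literal wording of the proposition (it applies to any population with $\mathcal{PE}=0$, not only the one at which hull growth happens to stop), and your closing caveat is a sharpening the paper omits: the final sentence can only be read existentially, since a degenerate restricted game can have $\mathcal{PE}=0$ yet admit exploitable meta NEs --- for instance, in Rock--Paper--Scissors, take $\Pi_i$ to contain the uniform NE mixture and pure Rock while $\Pi_{-i}$ contains only the NE mixture; the restricted payoff matrix is identically zero, so pure Rock paired with the NE mixture is a meta NE, yet it is exploitable in the full game. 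What the paper's route buys in exchange is the operationally useful fact that the meta-strategy PSRO actually outputs at the stabilization iteration is itself a full-game NE, which a purely existential argument does not deliver.
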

The proof is in Appendix \ref{Apx:pf_prop_PSRO}. From Proposition \ref{prop:PE} and \ref{prop:PSRO}, we are convinced that PE is indeed an appropriate performance measure for populations of polices. Using PE, we can now formally present why enlarging the \textit{gamescape} of the population in PSRO is  somewhat deceptive:
\begin{theorem}
\label{prop:gamescape}
The enlargement of the \textit{gamescape} is neither sufficient nor necessary for the decrease of PE. Considering two populations ($\Pi_i^1$ and $\Pi_i^2$) for player $i$ and one population $\Pi_{-i}$ for player $-i$, and denoting $\Pi^j=\Pi_i^j \times \Pi_{-i}$, $j=1,2$ we have
\begin{align}
{\rm \mathcal{GS}} (\Pi_i^1 | \Pi_{-i}) \subsetneqq {\rm \mathcal{GS}} (\Pi_i^2 | \Pi_{-i}) 
\nRightarrow
\mathcal{PE} (\Pi^1) \geq \mathcal{PE} (\Pi^2)
\notag \\
\mathcal{PE} (\Pi^1) \geq \mathcal{PE} (\Pi^2)
\nRightarrow
{\rm \mathcal{GS}} (\Pi_i^1 | \Pi_{-i}) \subsetneqq {\rm \mathcal{GS}} (\Pi_{i}^2 | \Pi_{-i}) 
\end{align}
\end{theorem}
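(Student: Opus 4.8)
The statement is a pair of non-implications, so the natural route is to exhibit explicit counterexamples rather than to argue in general. The plan is to build small two-player zero-sum games in which the \emph{gamescape} and the population exploitability are deliberately decoupled. The conceptual lever is that $\mathcal{GS}(\Pi_i\mid\Pi_{-i})$ only records the payoffs of $\Pi_i$ against the \emph{fixed} opponent set $\Pi_{-i}$ — it is a projection of the population onto the coordinates indexed by $\Pi_{-i}$ — whereas, by Proposition \ref{prop:PE}(4), $\mathcal{PE}(\Pi)$ equals the minimal full-game exploitability $\min_{\pi\in\mathcal{H}(\Pi_i)\times\mathcal{H}(\Pi_{-i})}\mathcal{E}(\pi)$, which depends on how the actual policies fare against \emph{every} opponent in $\Omega_{-i}$. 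Whenever $\Pi_{-i}$ is a proper subset of $\Omega_{-i}$ that fails to probe the globally exploitable directions of player $i$, these two quantities come apart, and that gap is exactly what I will engineer.

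For the first non-implication (enlargement is not sufficient), I would fix a single-element opponent set $\Pi_{-i}=\{\pi_{-i}\}$ and choose $\Pi_i^1=\{y\}$ to consist of a globally robust policy whose payoff against $\pi_{-i}$ is an interior value, so that $\mathcal{GS}(\Pi_i^1\mid\Pi_{-i})$ is a single point; and $\Pi_i^2=\{x,x'\}$ to consist of two policies whose payoffs against $\pi_{-i}$ straddle that value — giving a strictly larger gamescape — but which can both be crushed by some opponent strategy lying outside $\Pi_{-i}$. Computing $\mathcal{PE}$ through the min-exploitability characterization of Proposition \ref{prop:PE}(4), the robust singleton yields a small PE while every mixture of $x$ and $x'$ stays highly exploitable, so $\mathcal{PE}(\Pi^1)<\mathcal{PE}(\Pi^2)$ even though $\mathcal{GS}(\Pi_i^1\mid\Pi_{-i})\subsetneqq\mathcal{GS}(\Pi_i^2\mid\Pi_{-i})$, contradicting the claimed conclusion.

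For the second non-implication (enlargement is not necessary), I would instead produce two populations with \emph{identical} gamescapes but different PE — equality already falsifies the strict inclusion $\subsetneqq$. The simplest device is to take $\Pi_i^1$ to be a gamescape-spanning population and $\Pi_i^2=\Pi_i^1\cup\{y\}$, where $y$ is a globally robust policy whose payoff vector against $\Pi_{-i}$ already lies inside the convex hull of the existing rows, so that adding it leaves $\mathcal{GS}$ unchanged while strictly enlarging the policy hull. By Proposition \ref{prop:PE}(2) this can only lower PE, and with $y$ chosen robust it strictly lowers it, giving $\mathcal{PE}(\Pi^1)\geq\mathcal{PE}(\Pi^2)$ together with $\mathcal{GS}(\Pi_i^1\mid\Pi_{-i})=\mathcal{GS}(\Pi_i^2\mid\Pi_{-i})$. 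Both counterexamples can in fact be hosted inside a single $3\times 2$ payoff matrix, which keeps the verification short.

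The main obstacle is not the algebra but the joint constraint design: Proposition \ref{prop:PE}(2) forces PE to be monotone under policy-hull inclusion, so the ``not sufficient'' instance must use genuinely non-nested populations, and the only way for a strictly larger gamescape to coexist with a strictly \emph{worse} PE is to hide the exploitable directions from $\Pi_{-i}$ by keeping it a proper subset of $\Omega_{-i}$. I therefore expect the delicate step to be simultaneously certifying (i) the strict set relation (or equality) of the gamescapes and (ii) the reversed PE inequality; both reduce to short best-response computations via Proposition \ref{prop:PE}(4), which I would carry out explicitly on the chosen matrix.
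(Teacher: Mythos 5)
Your proposal is correct and shares the paper's overall method: both non-implications are settled by explicit counterexamples built on the observation that $\mathcal{GS}(\Pi_i \mid \Pi_{-i})$ only records payoffs against the fixed set $\Pi_{-i}$, while $\mathcal{PE}$ is a worst case over all of $\Omega_{-i}$. The constructions differ, though. The paper uses a single example in an augmented Rock--Paper--Scissors game, where two candidate policies are added to a common base population $\Pi_i=\{S,R',P'\}$: the candidate whose payoff row duplicates an existing row (so the gamescape does not grow) lowers PE strictly more than the candidate that strictly enlarges the gamescape; the second non-implication then follows from the very same example by swapping the labels $1\leftrightarrow 2$, strict inclusion failing because it holds in the reverse direction. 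You instead design two separate examples: non-nested populations (a robust singleton versus a straddling-but-globally-exploitable pair) for non-sufficiency, and nested populations with identical gamescapes for non-necessity, invoking the monotonicity property (Proposition \ref{prop:PE}, item 2) and the min-exploitability characterization (item 4) to certify the PE inequalities. Your equality-based refutation of $\subsetneqq$ is slightly stronger and cleaner (PE strictly drops with zero gamescape change), and your remark that non-sufficiency forces non-nested policy hulls is exactly right; the paper's single-example-plus-relabeling is more economical and, because its populations arise from the PSRO-style choice of which single policy to add next, speaks more directly to the algorithmic decision the theorem criticizes. The only thing your write-up defers is the concrete arithmetic, but your designs instantiate immediately --- e.g., in plain RPS with $\Pi_{-i}=\{R\}$, $y$ the uniform strategy, and $\{x,x'\}=\{P,S\}$: gamescapes $\{0\}\subsetneqq[-1,1]$ with PE values $\tfrac12<\tfrac23$; then $\{P,S\}$ versus $\{P,S,y\}$ gives equal gamescapes with PE dropping from $\tfrac23$ to $\tfrac12$ --- so this is a matter of completeness rather than correctness.
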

The proof of Theorem \ref{prop:gamescape} is in Appendix \ref{Apx:pf_GS}, where we provide concrete examples.

In other words, enlarging the gamescape in either short term or long term does not necessarily lead to a better approximation to a full game NE.

\section{Policy Space Diversity PSRO}
In this section, we develop a new diversity-enhancing PSRO variant, i.e., PSD-PSRO. In contrast to methods that enlarge the \textit{gamescape}, PSD-PSRO encourages the enlargement of PH of a population, which helps reduce a population's PE (according to Proposition \ref{prop:PE}). In addition, we develop a well-justified state-action sampling method to optimize our diversity metric in practice. Finally, we present the convergence property of PSD-PSRO and discuss its relation to the original PSRO. 



\subsection{A New Diversity Regularization Term for PSRO}
Our purpose of promoting diversity in PSRO is to facilitate the convergence to a full game NE. We follow the conventional scheme in previous diversity-enhancing PSRO variants \cite{perez2021modelling,liu2021towards,liu2022unified}, which introduces a diversity regularization term to the BR solving in PSRO. Nonetheless, our diversity regularization encourages the enlargement of PH of the current population, which is in contrast to the enlargement of \textit{gamescape} in previous methods \cite{balduzzi2019open,perez2021modelling,liu2021towards,liu2022unified}. We thus name our diversity metric \textit{policy space diversity}. Intuitively, the larger the PH of a population is, the more likely it will include a full game NE. More formally, a larger PH means a lower PE (Proposition \ref{prop:PE}), which means our diversity metric avoids the common weakness (Section \ref{sec:common_weakness}) of existing ones.   

Recall that the PH of a population is simply the  complete set of polices that are convex combinations of individual polices in the population. To quantify the contribution of a new policy to the enlargement of the PH of the current population, a straightforward idea is to maximize a distance between the new policy and the PH. Such a distance should be $0$ for any policy that belongs to the PH and greater than $0$ otherwise. Without loss of generality, the distance between a policy and a PH could be defined as the minimal distance between the policy and any policy in the PH. We can now write down the diversity regularized BR solving objective in PSD-PSRO, where at each iteration $t$ for player $i$ we add a new policy $\pi_{i}^{t+1}$ by solving:
\begin{align}
\label{equ:br_object}
    &\pi_{i}^{t+1} = \arg \max_{\pi_{i}} \left\{
    u(\pi_{i}, \sigma_{-i}^{t}) + 
    \lambda 
    \min_{\pi_i^k\in \mathcal{H}(\Pi_i^t)}
    {\rm dist} (\pi_{i}, \pi_{i}^{k})\right\},
\end{align}
where $\sigma^{t}_{-i}$ is the opponent's meta NE policy at the $t$-th iteration, 
$\lambda$ is a Lagrange multiplier,
and ${\rm dist}(\cdot, \cdot)$ is a distance function (will be specified in the next subsection) between two polices. 


\subsection{Diversity Optimization in Practice}
\label{subsec:method}
To be able to optimize our diversity metric (the right part in Equation \ref{equ:br_object}) in practice, we need to encode a policy into some representation space and specify a distance function there. 
Such a representation should be a one-to-one mapping between a policy and its representation. Also, to ensure that enlarging the convex hull in the representation space results in the enlargement of the PH, we require the representation to satisfy the linearity property. 
Formally, we have the following definition:
\begin{definition}
    A \textit{fine policy representation} for our purpose is a function $\rho: \Pi_i \rightarrow R^{N_i}$, which satisfies the following two properties:
\begin{itemize}[leftmargin=*]
\item (bijection)
    For any representation $\rho(\pi_i)$, there exists a unique behavior policy $\pi_i$ whose representation is $\rho(\pi_i)$, and vice-versa.
    \item (linearity)
    For any two policies ($\pi_i^j$ and $\pi_i^k$) and $\alpha$ ($0\leq\alpha\leq1$), the following holds:
    \begin{align*}
    \rho({\alpha \pi_i^j + (1-\alpha) \pi_i^k})
    = \alpha \rho({\pi_i^j})
    + (1-\alpha) \rho({\pi_i^k}),
    \end{align*}
    where $\alpha \pi_i^j + (1-\alpha) \pi_i^k$ means playing $\pi_i^j$ with probability $\alpha$ and $\pi_i^k$ with probability $(1-\alpha)$.
\end{itemize}
\end{definition}

Existing diversity metrics explicitly or implicitly define a policy representation \cite{liu2022unified}. For instance, the \textit{gamescape}-based methods \cite{balduzzi2019open,perez2021modelling,liu2021towards,liu2022unified} represent a policy using its payoff vector against the opponent's population. Yet, this representation is not a \textit{fine policy representation} as it is not a bijection (different policies can have the same payoff vector). 
The (joint) occupancy measure, which is a \textit{fine policy representation}, is usually used to encode a policy in the RL community \cite{syed2008apprenticeship,ho2016generative,liu2021towards}.The $f$-divergence is then employed to measure the distance between two policies \cite{liu2021towards,kamyar2019divergence,fu2017learning}. 
However, computing the $f$-divergence based on the occupancy measure is usually intractable and often in practice roughly approximated using the prediction of neural networks \cite{liu2021towards,kamyar2019divergence,fu2017learning}.


Instead, we use another \textit{fine policy representation}, i.e., the sequence-form representation \cite{hoda2010smoothing,farina2019optimistic,pmlr-v162-liu22e,lee2021last,bai2022near}, which was originally developed for representing a policy in multi-agent games. We then define the distance between two policies using the Bregman divergence, which can be further simplified to a tractable form and optimized using only state-action samples in practice. 

The sequence-form representation of a policy remembers the realization probability of reaching a state-action pair. We follow the definition in \cite{farina2019optimistic,pmlr-v162-liu22e}, where the sequence form representation $\bm{x}_i \in \mathcal{X}_i \subseteq [0, 1]^{|\mathcal{S}_i\times \mathcal{A}_i|}$ of $\pi_i$ is a vector: 
\begin{equation}
\bm{x}_i(s, a) = \prod_{\widetilde{s}_i, \widetilde{a} \in \tau(s, a)} \pi_i(\widetilde{a}|\widetilde{s}_i),
\end{equation}
where $\tau(s, a)$ is a trajectory from the beginning to $(s, a)$.
By the perfect-recall assumption, there is a unique $\tau$ that leads to $(s,a)$.
The policy $\pi_i$ can be written as $\pi_i(a|s) = \bm{x}_i(s, a) / \|\bm{x}_i(s)\|_1$, where $\bm{x}_i(s)$ is $(\bm{x}_i(s, a_1), \dots, \bm{x}_i(s, a_n))$ with $a_1, \dots, a_n \in \mathcal{A}(s)$.
Unlike the payoff vector representation or the occupancy measure representation, $\bm{x}_i$ is independent of the opponent's policy as well as the environmental dynamics. Therefore, it should be more appropriate in representing a policy for the diversity optimization. Without loss of generality and following \cite{hoda2010smoothing,farina2019optimistic,pmlr-v162-liu22e}, we define the distance ${\rm dist}(\pi_i, \pi'_i)$ between two policies as the Bregman divergence on the sequence form representation $\mathcal{B}_{d}(\bm{x}_i \| \bm{x}'_i)$, which can be further written in terms of state-action pairs (the derivation is presented in Appendix \ref{Apx:Bregman}) in the following:
\begin{equation}
\label{equ:dist_Bregman}
\begin{aligned}
{\rm dist}(\pi_i, \pi'_i):=  \mathcal{B}_{d}(\bm{x}_i \| \bm{x}'_i)
=  \sum_{s, a \in \mathcal{S}_i \times \mathcal{A}_i}  \left(\prod_{\widetilde{s}_i, \widetilde{a} \in \tau(s, a)} \pi_i(\widetilde{a}|\widetilde{s}_i)\right) \beta_s \mathcal{B}_{d_s}(\pi_i(s) \| \pi'_i(s)),
\end{aligned}
\end{equation}
where $\mathcal{B}_{d_s}(\pi_i(s) \| \pi'_i(s))$ is the Bregman divergence between $\pi_i(s)$ and $\pi'_i(s)$.
In our experiment, we let $\mathcal{B}_{d_s}(\pi_i(s) \| \pi'_i(s))  = \sum_{a} \pi_i(a|s)\log \pi_i(a|s)/\pi'_i(a|s) = \KL(\pi_i(s) \| \pi'_i(s))$, i.e., the KL divergence. In previous work \cite{hoda2010smoothing,farina2019optimistic,pmlr-v162-liu22e}, the coefficient $\beta_s$ usually declines monotonically as the length of the sequence increases. In our case, we make $\beta_s$ depend on an opponent's policy $b_{-i}$: $\beta_s = \prod_{\widetilde{s}_{-i}, \widetilde{a} \in \tau(s, a)} b_{-i}(\widetilde{a}|\widetilde{s}_{-i})$\footnote{Assume $\beta_s > 0$, i.e., $b_{-i}(a|s_{-i}) > 0, \forall (s_{-i}, a) \in \mathcal{S}_{-i}\times\mathcal{A}_{-i}$.}. 
This weighting method allows us to estimate the distance using the sampled average KL divergence and avoids importance sampling:
\begin{equation}
\label{equ:dist}
\begin{aligned}
{\rm dist}(\pi_i, \pi'_i)
=  & \sum_{s, a \in \mathcal{S}_i \times \mathcal{A}_i}  \left(\prod_{\widetilde{s}_i, \widetilde{a} \in \tau(s, a)} \pi_i(\widetilde{a}|\widetilde{s}_i)\right) \left(\prod_{\widetilde{s}_{-i}, \widetilde{a} \in \tau(s, a)} b_{-i}(\widetilde{a}|\widetilde{s}_{-i})\right) \mathcal{B}_{d_s}(\pi_i(s) \| \pi'_i(s)) \\
= & \mathbb{E}_{s_i \sim \pi_i, b_{-i}} [\KL(\pi_i(s_i) \| \pi'_i(s_i))],
\end{aligned}
\end{equation}
where $s_i \sim \pi_i, b_{-i}$ means sampling player $i$ 's information states from the trajectories that are collected by playing $\pi_i$ against $b_{-i}$. 
As a result, we can rewrite the diversity regularized BR solving objective in PSD-PSRO as follows:
\begin{align}
\label{equ:br_object2}
    &\pi_{i}^{t+1} = \arg \max_{\pi_{i}}
    \left\{u(\pi_{i}, \sigma_{-i}^{t}) + 
    \lambda 
    \min_{\pi_i^k\in \mathcal{H}(\Pi_i^t)}
    \mathbb{E}_{s_i \sim \pi_i, b_{-i}} [\KL(\pi_i(s_i) \| \pi_{i}^{k}(s_i))]\right\}.
\end{align}
Now we provide a practical way to optimize Equation \ref{equ:br_object2}. By regarding the opponent and chance as the environment, we can use the policy gradient method \cite{sutton1999policy,schulman2015trust,schulman2017proximal} in RL to train $\pi_i^{t+1}$. 
Denote the probability of generating $\tau$ by $\pi_i(\tau)$, and the payoff of player $i$ for the trajectory $\tau$ by $R(\tau)$. Let
$\pi^{min}_i=\arg \min_{\pi_i^k\in \mathcal{H}(\Pi_i^t)}
\mathbb{E}_{s_i \sim \pi_i, b_{-i}} [\KL(\pi_i(s_i) \| \pi_{i}^{k}(s_i))]$ be the policy in $\mathcal{H}(\Pi_i^t)$ that minimizes the distance and let $R^{KL}(\tau) = \mathbb{E}_{s_i \sim \tau}  [\KL(\pi_i(s_i) \| \pi_{i}^{min}(s_i))]$. 

The gradient of the first term $u(\pi_i, \sigma^t_{-i})$ in Equation \ref{equ:br_object2} with respect to $\pi_i$ can be written as:
\begin{align}
 \nabla  u(\pi_i, \sigma^t_{-i})
= & \nabla \int \pi_i(\tau) R(\tau) \rm{d}\tau
= \int \nabla \pi_i(\tau) R(\tau) \rm{d}\tau 
\notag \\
= & \int \pi_i(\tau) \nabla \log \pi_i(\tau) R(\tau) \rm{d}\tau
\notag \\
= & \mathbb{E}_{\tau \sim \pi_i, \sigma^t_{-i}} [\nabla \log \pi_i(\tau) R(\tau)].
\end{align}

The gradient of the diversity term in Equation \ref{equ:br_object2} with respect to $\pi_i$ can be written as:
\begin{align}
    & \nabla_{\pi_i} \min_{\pi_i^k\in \mathcal{H}(\Pi_i^t)}
    \mathbb{E}_{s_i \in \tau, \tau \sim \pi_i, b_{-i}} [\KL(\pi_i(s_i) \| \pi_{i}^{k}(s_i))]
    \notag \\
    = & \nabla \mathbb{E}_{s_i \in \tau, \tau \sim \pi_i, b_{-i}}
    [\KL(\pi_i(s_i) \| \pi_{i}^{min}(s_i))] 
    \notag \\
    = & \nabla \int \pi_i(\tau) R^{KL}(\tau) \rm{d}\tau
    \notag \\
    = & 
    \int \nabla \pi_i(\tau) R^{KL}(\tau)  {\rm{d}} \tau + \int \pi_i(\tau) \mathbb{E}_{s_i \sim \tau} [\nabla \KL(\pi_i(s_i) \| \pi_{i}^{min}(s_i))]  \rm{d} \tau
    \notag \\
    = & 
    \mathbb{E}_{\tau \sim \pi_i, b_{-i}} 
    [\nabla \log \pi_{i}(\tau) R^{KL}(\tau)] + \mathbb{E}_{s_i \sim \pi_i, b_{-i}} [\nabla \KL(\pi_i(s_i) \| \pi_{i}^{min}(s_i))],
\end{align}
where we use the property that $\nabla \mathbb{E}[KL(\pi_i(s_i)|\pi_i^{min}(s_i)] \in \partial \min_{\pi_i^k} \mathbb{E}[KL(\pi_i(s_i)|\pi_i^k(s_i)]$ , whose correctness can be shown from the following 
 proposition:
\begin{proposition}
    For any local Lipschitz continuous function $f(x, y)$, assume $\forall x, \min_y f(x, y)$ exists, then $\partial_x f(x, y)|_{y\in \argmin f(x, y)} \in \partial_x \min_y f(x, y)$, where $\partial f$ is the generalized gradient \cite{clarke1975generalized}.
\end{proposition}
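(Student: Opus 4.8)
The plan is to recognize $g(x) := \min_y f(x,y)$ as a lower-envelope (marginal) function and exploit two facts: that $g$ is \emph{dominated} by each selected branch $x\mapsto f(x,y^\ast)$, and that the generalized gradient is symmetric under negation. Fix a base point $x_0$ and a minimizer $y^\ast\in\argmin_y f(x_0,y)$ at which $f(\cdot,y^\ast)$ is differentiable (in the application $f$ is the expected $\KL$, which is smooth in $\pi_i$, so this holds). Since $f$ is locally Lipschitz and the minimum is attained over a compact set of $y$'s (here $y$ ranges over the compact simplex $\mathcal{H}(\Pi_i^t)$), $g$ is itself locally Lipschitz, so $\partial_x g(x_0)$ is well defined.

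First I would record the one-sided ``touching from above'' estimate. By definition $g(x)\le f(x,y^\ast)$ for $x$ near $x_0$, with equality at $x_0$ because $y^\ast$ is a minimizer there. Subtracting and using differentiability of $f(\cdot,y^\ast)$ gives $g(x)-g(x_0)\le \langle \nabla_x f(x_0,y^\ast),\, x-x_0\rangle + o(\|x-x_0\|)$, i.e. $\zeta:=\nabla_x f(x_0,y^\ast)$ is a Fréchet \emph{super}gradient of $g$ at $x_0$. Equivalently, $-\zeta$ is a Fréchet \emph{sub}gradient of $-g$ at $x_0$, meaning $(-g)(x)-(-g)(x_0)\ge \langle -\zeta,\, x-x_0\rangle + o(\|x-x_0\|)$.

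Next I would invoke the standard inclusion between the Fréchet and Clarke objects. For locally Lipschitz $h$, any Fréchet subgradient $\eta\in\hat\partial h(x_0)$ satisfies, for every direction $v$, $\langle\eta,v\rangle\le \liminf_{t\downarrow0}\tfrac{h(x_0+tv)-h(x_0)}{t}\le h^{\circ}(x_0;v)$, where $h^{\circ}$ is Clarke's directional derivative (the last inequality holds because $h^{\circ}$ is a $\limsup$ over nearby base points, dominating the ordinary lower directional derivative); hence $\eta\in\partial h(x_0)$, i.e. $\hat\partial h(x_0)\subseteq\partial h(x_0)$. Applying this with $h=-g$ and $\eta=-\zeta$ gives $-\zeta\in\partial(-g)(x_0)$. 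Finally, the negation rule of the generalized gradient, $\partial(-g)(x_0)=-\partial g(x_0)$ \cite{clarke1975generalized}, yields $\zeta=\nabla_x f(x_0,y^\ast)\in\partial_x g(x_0)$, which is the claim. Running the argument over every $y^\ast\in\argmin$ shows each branch gradient lies in $\partial_x g(x_0)$.

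The step I expect to be the main obstacle — and where a naive attempt goes wrong — is getting the \emph{direction} of the inclusion right. The domination $g\le f(\cdot,y^\ast)$ only controls $g$ from above, so the immediate first-order estimate produces a \emph{super}gradient, not a subgradient, and a direct attempt to lower-bound $g^{\circ}(x_0;v)$ from the envelope inequality fails. It is also tempting to conflate the statement with Clarke's marginal-function theorem, which gives only the reverse inclusion $\partial g(x_0)\subseteq\mathrm{co}\{\nabla_x f(x_0,y^\ast):y^\ast\in\argmin\}$. The resolution is precisely the two ingredients above: passing to $-g=\max_y(-f)$ turns ``touching from above'' into a genuine Fréchet subgradient of $-g$, and the symmetry $\partial(-g)=-\partial g$ transports the resulting membership back to $\partial g$. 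Confirming local Lipschitzness of $g$ and attainment of the minimum (so the selection $y^\ast$ exists) are the remaining routine checks.
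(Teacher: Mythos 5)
Your proof is correct (under the two caveats you state explicitly --- differentiability of the selected branch at the base point and enough compactness to keep $g:=\min_y f(\cdot,y)$ locally Lipschitz --- both of which hold in the paper's application), but it takes a genuinely different route from the paper's. The paper disposes of the proposition in one line by citing Theorem 2.1, property (4), of Clarke's paper, read as saying that $\partial_x \min_y f(x,y)$ \emph{equals} the convex hull of the active-branch gradients $\{\nabla_x f(x,y): y\in \argmin_y f(x,y)\}$, from which membership of each such gradient is immediate. You instead prove the membership from first principles: the envelope inequality $g \le f(\cdot,y^\ast)$ with equality at $x_0$ makes $\nabla_x f(x_0,y^\ast)$ a Fréchet supergradient of $g$, hence a Fréchet subgradient of $-g$; the elementary inclusion of Fréchet subgradients in the Clarke subdifferential plus the symmetry $\partial(-g)=-\partial g$ then yield $\nabla_x f(x_0,y^\ast)\in\partial_x g(x_0)$. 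What your route buys: it needs only two cheap facts about the Clarke gradient rather than the full Danskin--Clarke envelope \emph{equality}, and it exposes exactly which hypotheses do the work. Those extra hypotheses are not a weakness relative to the paper: they are genuinely necessary, because for merely locally Lipschitz branches only the marginal-function inclusion $\partial_x g(x_0)\subseteq \mathrm{co}\{\partial_x f(x_0,y):y\in\argmin\}$ survives, and the proposition itself (read as $\partial_x f(x_0,y^\ast)\subseteq \partial_x g(x_0)$) can fail --- e.g. $f(x,1)=|x|$, $f(x,2)=2|x|$ gives $g(x)=|x|$ and $\partial_x g(0)=[-1,1]$, yet $y=2$ is active at $0$ with $\partial_x f(0,2)=[-2,2]\not\subseteq[-1,1]$; the convex-hull equality fails there too. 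What the paper's route buys is brevity, at the cost of leaning on a cited equality whose own regularity requirements (smooth dependence on $x$, compactness and continuity in the parameter) are left implicit. Your closing observation that the naive marginal-function theorem points in the wrong direction is precisely the trap that the paper's terse citation glosses over.
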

\begin{proof}
    According to Theorem 2.1 (property (4) in \cite{clarke1975generalized}, the result is immediate, as $\partial_x \min_y f(x, y)$ is the convex hull of $\{\partial_x f(x, y)|y\in \arg \max g(x,y) \}$.
\end{proof}
Combining the above two equations, we have,
\begin{equation}
\label{equ:estimate_div}
\begin{aligned}
    & \nabla_{\pi_i} \left(u(\pi_{i}, \sigma_{-i}^{t}) + \lambda \min_{\pi_i^k\in \mathcal{H}(\Pi_i^t)}
    \mathbb{E}_{s_i \sim \pi_i, b_{-i}} [\KL(\pi_i(s_i) \| \pi_{i}^{k}(s_i))]\right)
     \\
    =  
    & \mathbb{E}_{\tau \sim \pi_i, \sigma^t_{-i}} 
    [\nabla \log \pi_{i}(\tau) R(\tau) ]  
    +  \lambda
    \mathbb{E}_{\tau \sim \pi_i, b_{-i}} 
    [\nabla \log \pi_{i}(\tau) R^{KL} (\tau)] \\
    & +  \lambda
    \mathbb{E}_{s_i \sim \pi_i, b_{-i}} 
    [\nabla \KL(\pi_i(s_i) \| \pi_{i}^{min}(s_i))].
\end{aligned}
\end{equation}

According to Equation \ref{equ:estimate_div}, we can see that optimizing $\pi^{t+1}_i$ requires maximizing two types of rewards: $R(\tau)$ and $R^{KL}(\tau)$. This can be done by averaging the gradients using samples that are generated by playing $\pi_i$ against $\sigma^t_{-i}$ and $b_{-i}$ separately. The last term in Equation \ref{equ:estimate_div} is easily estimated by sampling states in the sampled trajectories via playing $\pi_i$ against $b_{-i}$. For training efficiency, we simply set $b_{-i} = \sigma^t_{-i}$ in our experiments, although other settings of $b_{-i}$ are possible, e.g., the uniform random policy.  
We also want to emphasize that we optimize $\pi^{t+1}_i$ for each iteration $t$, and $\sigma^t_{-i}$ is fixed during an iteration and thus can be viewed as a part of the environment. Finally, to estimate the distance between $\pi_i$ and $\pi^{min}_{i}$, we should ideally compute $\pi^{min}_i$ exactly first by solving a convex optimization problem. In practice, we find sampling the policies in $\mathcal{H}(\Pi_i^t)$ and using the minimal sampled distance already gives us satisfactory performance. The pseudo-code of PSD-PSRO is provided in Appendix \ref{Apx:algo}.

\subsection{The Convergence Property of PSD-PSRO}
\label{subsec:convergence property}
We first show how the PH evolves in the original PSRO:
\begin{proposition}
\label{prop:PSRO-PH}
Before the PE of the joint population reaches $0$, adding a BR to the meta NE policy from last iteration in PSRO will strictly enlarge the PH of the current population.
\end{proposition}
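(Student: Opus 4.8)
The plan is to argue by contradiction, leaning on the characterization of $\mathcal{PE}$ established in Proposition~\ref{prop:PE}. First I would make explicit the reduction implicit in the statement: since the joint PH is the product $\mathcal{H}(\Pi_i^t) \times \mathcal{H}(\Pi_{-i}^t)$ and PSRO augments each population by $\Pi_i^{t+1} = \Pi_i^t \cup \{\pi_i^{t+1}\}$, the joint PH fails to \emph{strictly} enlarge after one iteration exactly when \emph{both} newly added best responses are already redundant, i.e. $\pi_i^{t+1} \in \mathcal{H}(\Pi_i^t)$ and $\pi_{-i}^{t+1} \in \mathcal{H}(\Pi_{-i}^t)$. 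I would assume this redundancy and show it forces $\mathcal{PE}(\Pi^t) = 0$, contradicting the hypothesis $\mathcal{PE}(\Pi^t) > 0$.

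The key step compares the \emph{restricted} best-response value against the \emph{full-game} best-response value. Let $(\sigma_i^t, \sigma_{-i}^t)$ be the meta NE on $\mathbf{M}_{\Pi_i^t, \Pi_{-i}^t}$. By the NE best-response property and the linearity of $u_i$ in the meta-strategy, $\sigma_i^t$ maximizes the payoff against $\sigma_{-i}^t$ over the hull, so $u_i(\sigma_i^t, \sigma_{-i}^t) = \max_{\pi_i \in \mathcal{H}(\Pi_i^t)} u_i(\pi_i, \sigma_{-i}^t)$. Meanwhile $\pi_i^{t+1} = \mathcal{BR}(\sigma_{-i}^t)$ attains the full-game maximum $\max_{\pi_i'} u_i(\pi_i', \sigma_{-i}^t)$, and since $\mathcal{H}(\Pi_i^t)$ sits inside the full policy set I get $u_i(\sigma_i^t, \sigma_{-i}^t) \le u_i(\pi_i^{t+1}, \sigma_{-i}^t)$. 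Under the redundancy assumption $\pi_i^{t+1} \in \mathcal{H}(\Pi_i^t)$, the best response is itself feasible in the restricted maximization, which yields the reverse inequality and hence the equality $u_i(\pi_i^{t+1}, \sigma_{-i}^t) = u_i(\sigma_i^t, \sigma_{-i}^t)$. This says precisely that $\sigma_{-i}^t$ cannot be exploited beyond the value $\sigma_i^t$ already secures, so the player-$i$ term in the exploitability of $\sigma^t$ vanishes; running the symmetric argument for player $-i$ makes the other term vanish too.

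Summing the two vanishing terms in the exploitability of Equation~\ref{equ:exp} gives $\mathcal{E}(\sigma^t) = 0$, so $\sigma^t = (\sigma_i^t, \sigma_{-i}^t)$ is a full-game NE. Because $\sigma^t \in \mathcal{H}(\Pi_i^t) \times \mathcal{H}(\Pi_{-i}^t)$, Proposition~\ref{prop:PE}(4) gives $\mathcal{PE}(\Pi^t) = \min_{\pi' \in \mathcal{H}(\Pi_i^t) \times \mathcal{H}(\Pi_{-i}^t)} \mathcal{E}(\pi') \le \mathcal{E}(\sigma^t) = 0$, and with Proposition~\ref{prop:PE}(1) this forces $\mathcal{PE}(\Pi^t) = 0$, the desired contradiction. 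Equivalently, one could invoke Proposition~\ref{prop:PE}(5) directly: $\mathcal{PE}(\Pi^t) > 0$ means no full-game NE lies in the joint PH, yet the redundancy assumption just produced one.

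I expect the main difficulty to be logical bookkeeping rather than any hard estimate. The delicate point is that $\sigma_i^t$ is optimal only \emph{within} $\mathcal{H}(\Pi_i^t)$, and it is exactly the sandwich inequality that upgrades this restricted optimality to full-game optimality \emph{when and only when} the best response is already in the hull; I would state that equivalence carefully so the contradiction is airtight. A secondary item is the reduction from ``joint PH does not strictly enlarge'' to ``both best responses are redundant,'' which relies on the product structure of the joint PH and should be spelled out before the contradiction. Finally, the argument uses exact best responses: with only $\epsilon$-approximate oracles the central equality degrades to an $\epsilon$-gap, so I would keep the statement phrased for exact best responses.
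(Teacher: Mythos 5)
Your proposal is correct and follows essentially the same route as the paper: the paper proves this proposition inside its proof of Proposition~\ref{prop:PSRO}, using exactly your two ingredients — the sandwich inequality showing that if both best responses lie in their respective hulls then the meta NE $(\sigma_i^t,\sigma_{-i}^t)$ is a full-game NE, combined with the characterization in Proposition~\ref{prop:PE} that a full-game NE lies in the joint PH if and only if $\mathcal{PE}=0$. The only difference is presentational (you phrase it as an explicit contradiction, the paper argues the contrapositive directly), and your added care about the product structure of the joint PH and exact-vs-approximate oracles is consistent with, not divergent from, the paper's argument.
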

The proof is in Appendix \ref{Apx:pf_prop_PSRO}. From Proposition \ref{prop:PSRO-PH}, we can see that adding a BR serves one way (an implicit way) of enlarging the PH and hence reducing the PE. In contrast, the optimization of our diversity term in Equation \ref{equ:br_object2} aims to explicitly enlarge the PH. In other words, adding a diversity regularized BR in PSD-PSRO serves as a mixed way of enlarging the PH. More formally, we have the following theorem:
\begin{theorem}
\label{prop:PSD_PSRO-PH}
 (1) In PSD-PSRO, before the PE of the joint population reaches $0$, adding an optimal solution in Equation \ref{equ:br_object2} will strictly enlarge the PH and hence reduce the PE. 
 (2) Once the PH can not be enlarged (i.e., PSD-PSRO converges) by adding an optimal solution in Equation \ref{equ:br_object2}, the PE reaches 0, and PSD-PSRO finds a full game NE.
\end{theorem}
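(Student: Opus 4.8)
The plan is to reduce both claims to a single structural fact about the diversity term $D(\pi_i) := \min_{\pi_i^k \in \mathcal{H}(\Pi_i^t)} \mathbb{E}_{s_i \sim \pi_i, b_{-i}}[\KL(\pi_i(s_i) \| \pi_i^k(s_i))]$, namely that $D(\pi_i) = 0$ exactly when $\pi_i \in \mathcal{H}(\Pi_i^t)$ and $D(\pi_i) > 0$ otherwise. First I would establish this by invoking the two defining properties of the sequence-form representation as a \emph{fine policy representation}: linearity guarantees that $\mathcal{H}(\Pi_i^t)$ corresponds exactly to the convex hull of the sequence-form vectors $\{\bm{x}_i^k\}$, while the bijection property, together with the full-support assumption $b_{-i}(a|s_{-i}) > 0$ that forces every weight $\beta_s > 0$, guarantees that the KL-based Bregman divergence on the sequence form vanishes only for behaviorally identical policies. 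Hence ${\rm dist}(\pi_i, \pi_i^k) = 0$ iff $\bm{x}_i = \bm{x}_i^k$, and $D(\pi_i) = 0$ iff $\bm{x}_i$ lies in that convex hull, i.e. iff $\pi_i \in \mathcal{H}(\Pi_i^t)$.

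For part (1), I would exploit the optimality of $\pi_i^{t+1}$ in Equation \ref{equ:br_object2} against the unregularized best response $\pi_i^{BR} \in \mathcal{BR}(\sigma_{-i}^t)$. Since $\pi_i^{t+1}$ maximizes $u(\pi_i, \sigma_{-i}^t) + \lambda D(\pi_i)$ whereas $\pi_i^{BR}$ maximizes the utility term alone, rearranging the optimality inequality gives $\lambda D(\pi_i^{t+1}) \geq [u(\pi_i^{BR}, \sigma_{-i}^t) - u(\pi_i^{t+1}, \sigma_{-i}^t)] + \lambda D(\pi_i^{BR}) \geq \lambda D(\pi_i^{BR})$, where the last step uses that $\pi_i^{BR}$ is a utility-maximizer. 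Proposition \ref{prop:PSRO-PH} tells us that, before the PE reaches $0$, adding $\pi_i^{BR}$ strictly enlarges the PH, so $\pi_i^{BR} \notin \mathcal{H}(\Pi_i^t)$ and hence $D(\pi_i^{BR}) > 0$ by the lemma above. With $\lambda > 0$ this forces $D(\pi_i^{t+1}) > 0$, so $\pi_i^{t+1} \notin \mathcal{H}(\Pi_i^t)$, which strictly enlarges the PH; Proposition \ref{prop:PE}(2) then yields the decrease of the PE.

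Part (2) follows by contraposition of the chain just derived. If adding the optimal solution fails to enlarge the PH, then $\pi_i^{t+1} \in \mathcal{H}(\Pi_i^t)$, i.e. $D(\pi_i^{t+1}) = 0$; but the inequalities in part (1) show that whenever $\mathcal{PE}(\Pi^t) > 0$ we necessarily have $D(\pi_i^{t+1}) > 0$. Therefore $\mathcal{PE}(\Pi^t) = 0$, and Proposition \ref{prop:PE}(5) (equivalently Proposition \ref{prop:PSRO}) guarantees that a full-game NE lies in $\mathcal{H}(\Pi_i^t) \times \mathcal{H}(\Pi_{-i}^t)$, so the meta NE on $\Pi^t$ is a full-game NE.

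The main obstacle I anticipate is the lemma of the first paragraph rather than the optimization comparison, which is elementary. The delicate point is that the distance in Equation \ref{equ:dist} is weighted both by the $\pi_i$-realization probability and by $\beta_s$, so I must argue carefully that a vanishing distance on the states reachable under $(\pi_i, b_{-i})$ already forces equality of the full sequence-form vectors; this is exactly where the full-support assumption on $b_{-i}$ and the fact that sequence forms depend only on reachable states become essential. I would also confirm that the maximizer $\pi_i^{t+1}$ in Equation \ref{equ:br_object2} exists, by continuity of the objective over the compact strategy space, so that the comparison argument is not vacuous.
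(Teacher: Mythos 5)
Your proposal is correct, but its core argument for part (1) is genuinely different from the paper's. The paper never compares the regularized optimizer with the best response directly; instead it shows that every hull policy is strictly dominated in the regularized objective: under $\mathcal{PE}(\Pi^t)>0$ and (WLOG) no best response to $\sigma^t_{-i}$ lying in $\mathcal{H}(\Pi_i^t)$, every $\hat{\sigma}_i\in\mathcal{H}(\Pi_i^t)$ satisfies
\begin{equation*}
u(\hat{\sigma}_i,\sigma^t_{-i})+\lambda\cdot 0 \;<\; u\bigl(\mathcal{BR}(\sigma^t_{-i}),\sigma^t_{-i}\bigr) \;\leq\; u\bigl(\mathcal{BR}(\sigma^t_{-i}),\sigma^t_{-i}\bigr)+\lambda\, D\bigl(\mathcal{BR}(\sigma^t_{-i})\bigr),
\end{equation*}
so the maximizer of Equation \ref{equ:br_object2} cannot lie in the hull. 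This needs only the trivial direction of your separation lemma ($D\equiv 0$ on the hull and $D\geq 0$ everywhere); the strict utility inequality comes for free, since a hull policy matching the BR's utility would itself be a best response inside the hull, contradicting the WLOG assumption. Your route instead runs the optimality inequality against $\pi_i^{BR}$ to conclude $\lambda D(\pi_i^{t+1})\geq \lambda D(\pi_i^{BR})>0$, which requires the nontrivial direction of the lemma --- strict positivity of $D$ off the hull --- and hence exactly the reachability/full-support/min-attainment analysis you flag as the main obstacle (compactness of $\mathcal{H}(\Pi_i^t)$ in sequence form plus lower semicontinuity of the KL term give attainment). That lemma is real work the paper's proof avoids entirely; in exchange, your route makes explicit the separation property $D(\pi_i)=0\iff\pi_i\in\mathcal{H}(\Pi_i^t)$ (up to realization equivalence), which is the cleanest justification of the diversity term, and yields the quantitative bound that the new policy is at least as far from the hull as the BR is. Two details to align with the paper: Proposition \ref{prop:PSRO-PH} only guarantees that the BR leaves the hull for \emph{at least one} player, so your chain must be run for that player (WLOG), as the paper does --- for the other player $D(\pi_i^{BR})$ may vanish and nothing follows; and ``$\pi_i^{BR}\notin\mathcal{H}(\Pi_i^t)$'' must be read up to realization equivalence (which the proof of Proposition \ref{prop:PSRO} does deliver) for your lemma to give $D(\pi_i^{BR})>0$. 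Your part (2) is the same contraposition as the paper's.
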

The proof is in Appendix \ref{Apx:pf_PSD_PSRO}. In terms of convergence property, one significant benefit of PSD-PSRO over other state-of-the-art diversity-enhancing PSRO variants \cite{balduzzi2019open,perez2021modelling,liu2021towards,liu2022unified} is the convergence of PH guarantees a full game NE in PSD-PSRO. Yet, it is not clear in those papers whether a full game NE is found once the PH of their populations converge. Notably, PSRO$_{rN}$ \cite{balduzzi2019open} is not guaranteed to find a NE once converged \cite{mcaleer2020pipeline}. In practice, we expect a significant performance improvement of PSD-PSRO over PSRO in approximating a NE. As for different games, there might exist different optimal trade-offs between `exploitation' (adding a BR) and `exploration' (optimizing our diversity metric) in enlarging the PH and reducing the PE. In other words, PSD-PSRO generalizes PSRO (a PSD-PSRO instance when $\lambda$ = 0) in ways of enlarging the PH and reducing the PE.





\section{Related Work}
Diversity has been widely studied in evolutionary computation \cite{fogel2006evolutionary}, with a central focus that mimics the natural evolution process. One of the ideas is novelty search \cite{lehman2011abandoning}, which searches for policies that lead to novel outcomes. 
By hybridizing novelty search with a fitness objective, quality-diversity \cite{pugh2016quality} aims for diverse behaviors of good qualities. 
Despite these methods achieving good empirical results \cite{cully2015robots,jaderberg2019human}, the diversity metric is often hand-crafted for different tasks. 

Promoting diversity is also intensively studied in RL. 
By adding a distance regularization between the current policy and a previous policy, a diversity-driven approach has been proposed for good exploration \cite{hong2018diversity}. 
Unsupervised learning of diverse policies \cite{eysenbach2018diversity} has been studied to serve as an effective pretraining mechanism for downstream RL tasks.
A diversity metric based on DPP \cite{parker2020effective} has been proposed to improve exploration in population-based training. 
Diverse behaviors were learned in order to improve generalization ability for test environments that are different from training \cite{kumar2020one}.
A diversity-regularized collaborative exploration strategy has been proposed in \cite{peng2020non}. 
Reward randomization \cite{tang2020discovering} has been employed to discover diverse strategies in multi-agent games.
Trajectory diversity has been studied for better zero-shot coordination in a multi-agent environment \cite{lupu2021trajectory}. Quality-similar diversity has been investigated in \cite{wu2022quality}.

Diversity also plays a role in game-theoretic methods. 
Smooth FP \cite{fudenberg1995consistency} adds a policy entropy term when finding a BR. 
PSRO$_{rN}$ \cite{balduzzi2019open} encourages effective diversity, which considers amplifying the strength over the weakness in a policy.
DPP-PSRO \cite{perez2021modelling} introduces a diversity metric based on DPP and provides a geometric interpretation of behavioral diversity.
BD\&RD-PSRO \cite{liu2021towards} combines the occupancy measure mismatch and the diversity on payoff vectors as a unified diversity metric.
UDM \cite{liu2022unified} summarizes existing diversity metrics, by providing a unified diversity framework. In both opponent modeling \cite{fu2022greedy} and opponent-limited subgame solving \cite{liu2023opponent}, diversity has been shown to have a large impact on the performance.

\begin{figure}[ht!]	
\subfigure
{
	\begin{minipage}{0.47\textwidth}
		\centering          
		\includegraphics[width=1\linewidth]{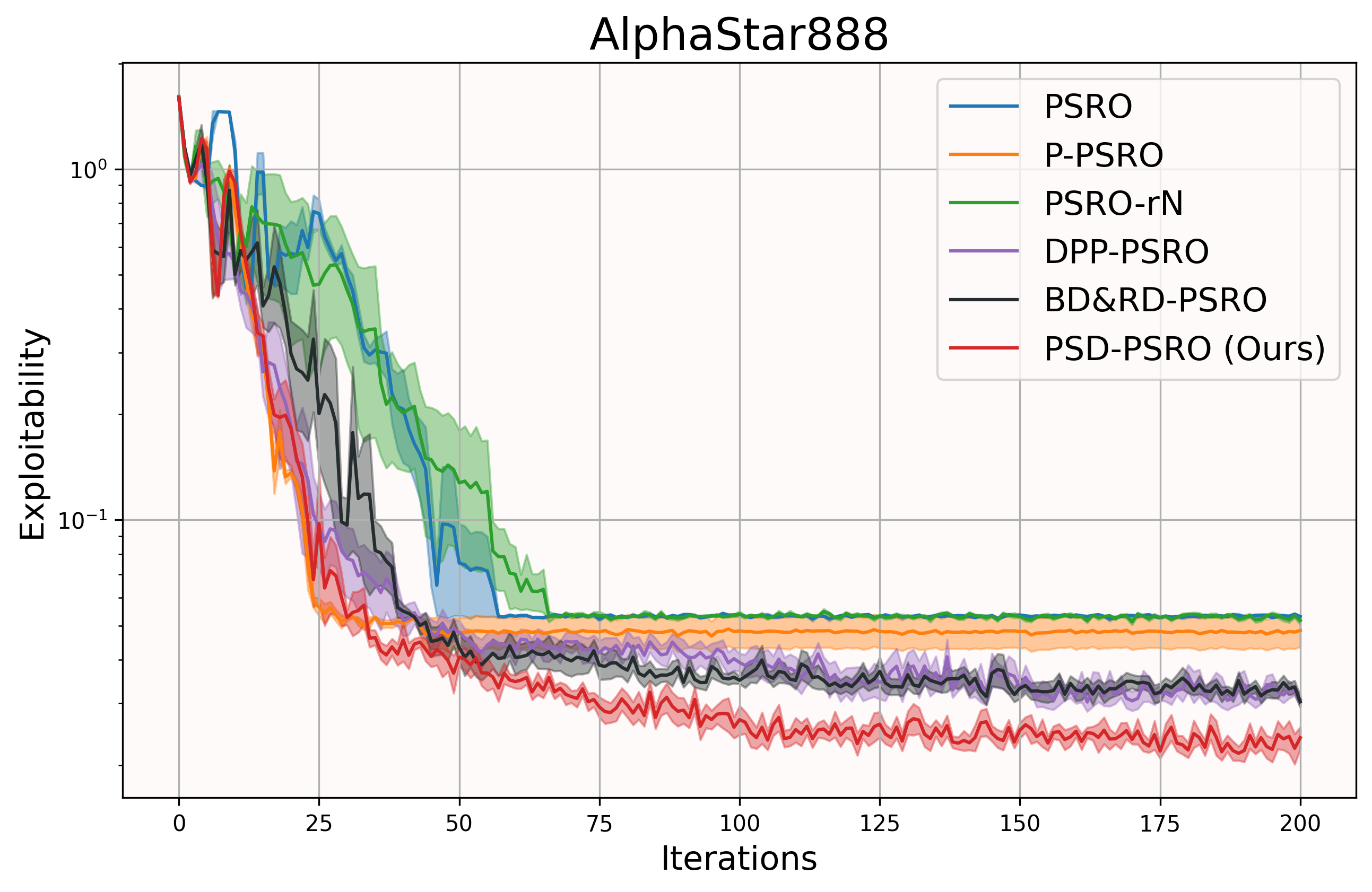}   
	\end{minipage}
}
\subfigure
{
	\begin{minipage}{0.47\textwidth}
		\centering      
		\includegraphics[width=1\linewidth]{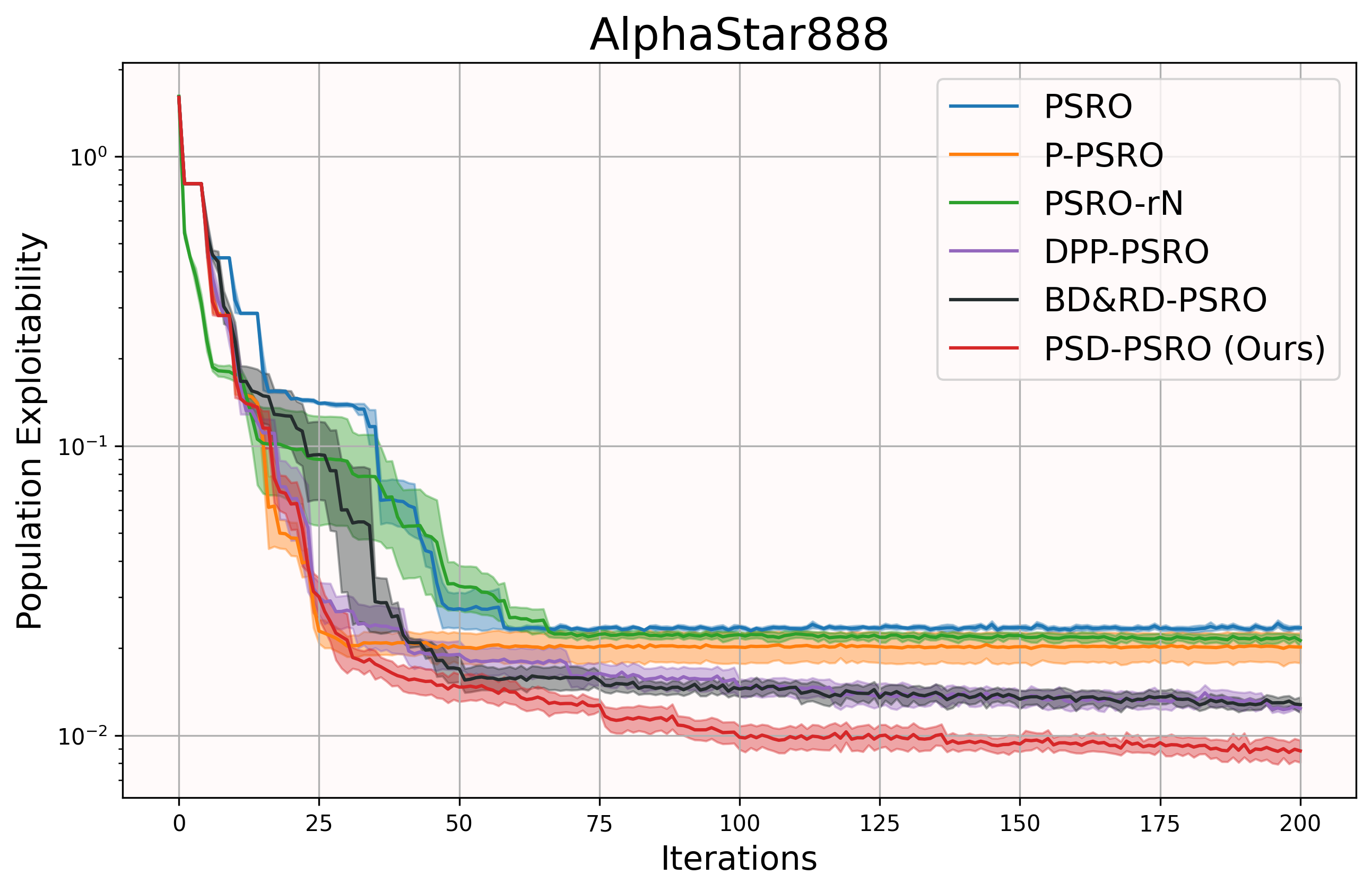}   
	\end{minipage}
}
\caption{
(a): \textit{Exploitability} of the meta NE.
(b): PE of the joint population.}
\label{fig:alphastar} 
\end{figure}

\begin{figure}[ht!]
\centerline{
\includegraphics[width=1.0\linewidth]{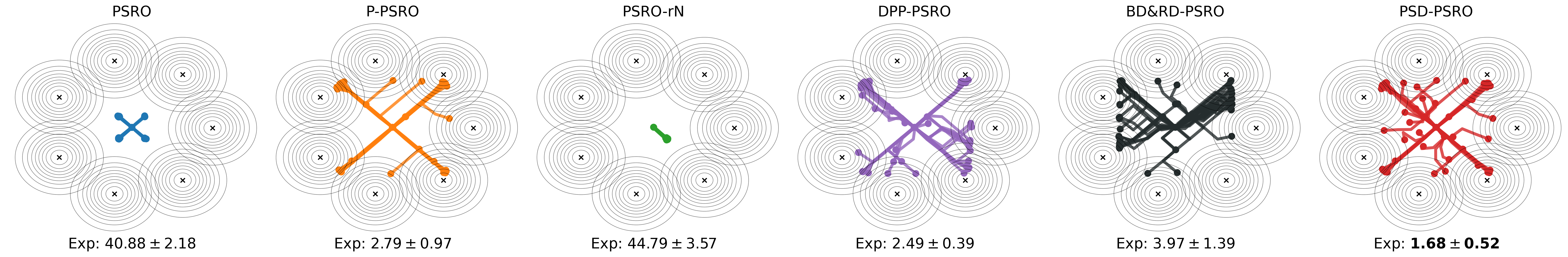}
}
\caption{Non-Transitive Mixture Game. Exploration trajectories during training. For each method, the final \textit{exploitability} $\times100$ (Exp) is reported at the bottom.}
\label{fig:NNM}
\end{figure}

\section{Experiments}
\label{sec:exp}
The main purpose of the experiments is to compare PSD-PSRO with existing state-of-the-art PSRO variants in terms of approximating a full game NE. The baseline methods include  PSRO \cite{lanctot2017unified}, Pipeline-PSRO (P-PSRO) \cite{mcaleer2020pipeline}, PSRO$_{rN}$ \cite{balduzzi2019open}, DPP-PSRO \cite{perez2021modelling}, and BD\&RD-PSRO \cite{liu2021towards}. The benchmarks consist of single-state games (AlphaStar888 and non-transitive mixture game) and complex extensive games (Leduc poker and Goofspiel). For AlphaStar888 and Leduc poker, we report the \textit{exploitability} of the meta NE and the PE of the joint population through the training process. For the non-transitive mixture game, we illustrate the `diversity' of the population and report the final \textit{exploitability}. For Goofspiel where the exact \textit{exploitability} is intractable, we report the win rate between the final agents. In addition, we illustrate how the PH evolves for each method using the Disc game \cite{balduzzi2019open} in Appendix \ref{Apx:visual_PH}, where PSD-PSRO is more effective at enlarging the PH and approximating a NE.
An ablation study on $\lambda$ of PSD-PSRO in Appendix \ref{Apx:ablation_lambda} reveals that, for different benchmarks, an optimal trade-off between `exploitation' and `exploration' in enlarging the PH to approximate a NE usually happens when $\lambda$ is greater than zero. Error bars or stds in the results are obtained via $5$ independent runs. 
In Appendix D.3, we also investigate the time cost of calculating our policy space diversity. More details for the environments and hyper-parameters are given in Appendix \ref{Apx:Exp_Details}. 

\textbf{AlphaStar888} is an empirical game generated from the process of solving Starcraft II \cite{vinyals2019grandmaster}, which contains a payoff table for $888$ RL policies. be viewed as a zero-sum symmetric two-player game where there is only one state $s_0$.
In $s_0$, there are 888 legal actions. Any mixed strategy is a discrete probability distribution over the 888 actions. 
Hence, the distance function in Equation \ref{equ:br_object2} for AlphaStar888 reduces to the KL divergence between two 888-dim discrete probability distributions.

In Figure \ref{fig:alphastar}, we can see that PSD-PSRO is more effective at reducing both the \textit{exploitability} and PE than other methods. 

\textbf{Non-Transitive Mixture Game} consists of $7$ equally-distanced Gaussian humps on the $\rm{2D}$ plane. Each strategy can be represented by a point on the $\rm{2D}$ plane, which is equivalent to the weights (the likelihood of that point in each Gaussian distribution) that each player puts on the humps. 
The optimal strategy is to stay close to the center of the Gaussian humps and explore all the distributions. In Figure \ref{fig:NNM}, we show the exploration trajectories for different methods during training, where PSRO and PSRO$_{rN}$ get trapped and fail in this non-transitive game. In contrast,
PSD-PSRO tends to find diverse strategies and explore all Gaussians. Also, the \textit{exploitability} of the meta NE of the final population is significantly lower in PSD-PSRO than others.



\begin{figure}[ht!]	
\subfigure
{
	\begin{minipage}{0.47\textwidth}
		\centering          
		\includegraphics[width=1\linewidth]{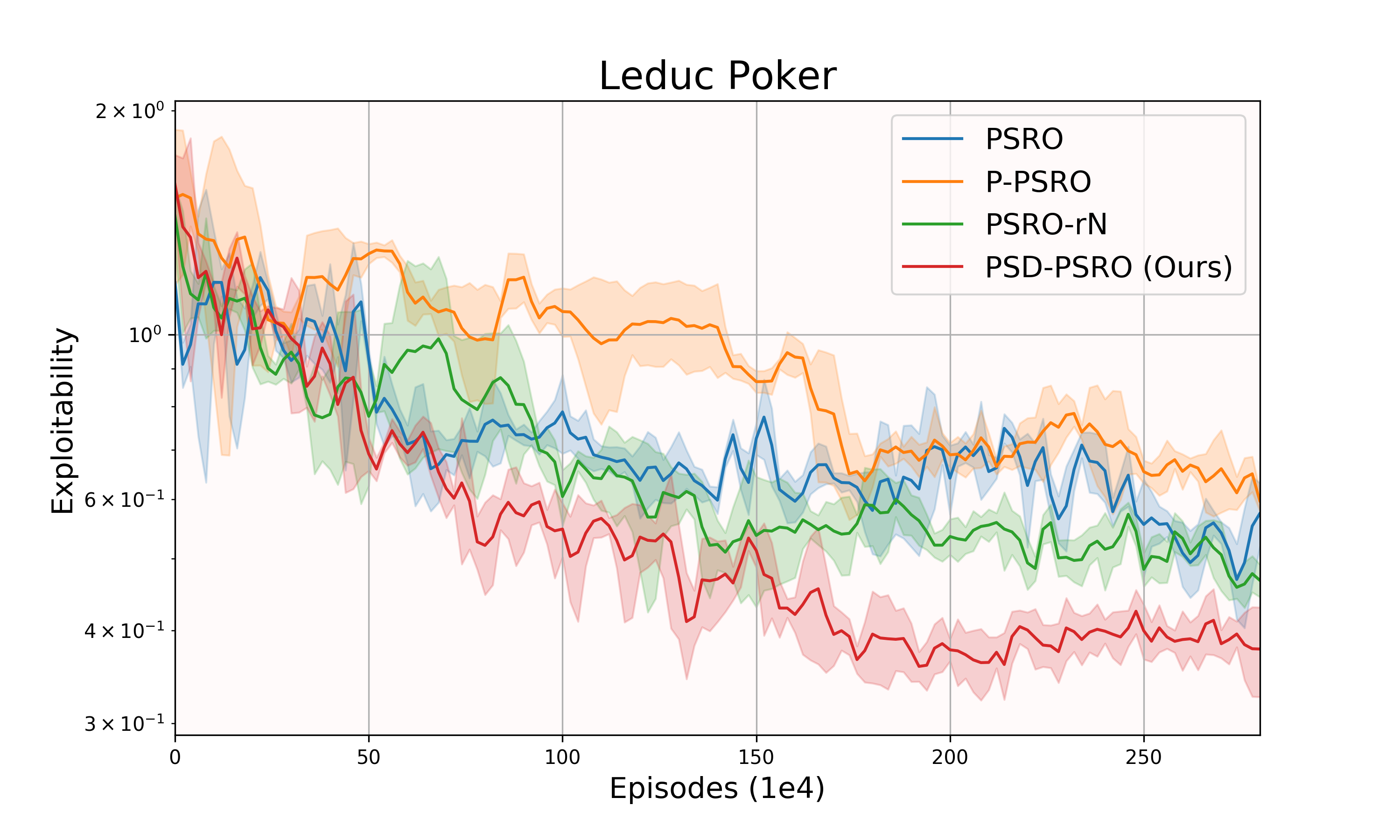}   
	\end{minipage}
}
\subfigure 
{
	\begin{minipage}{0.47\textwidth}
		\centering      
		\includegraphics[width=1\linewidth]{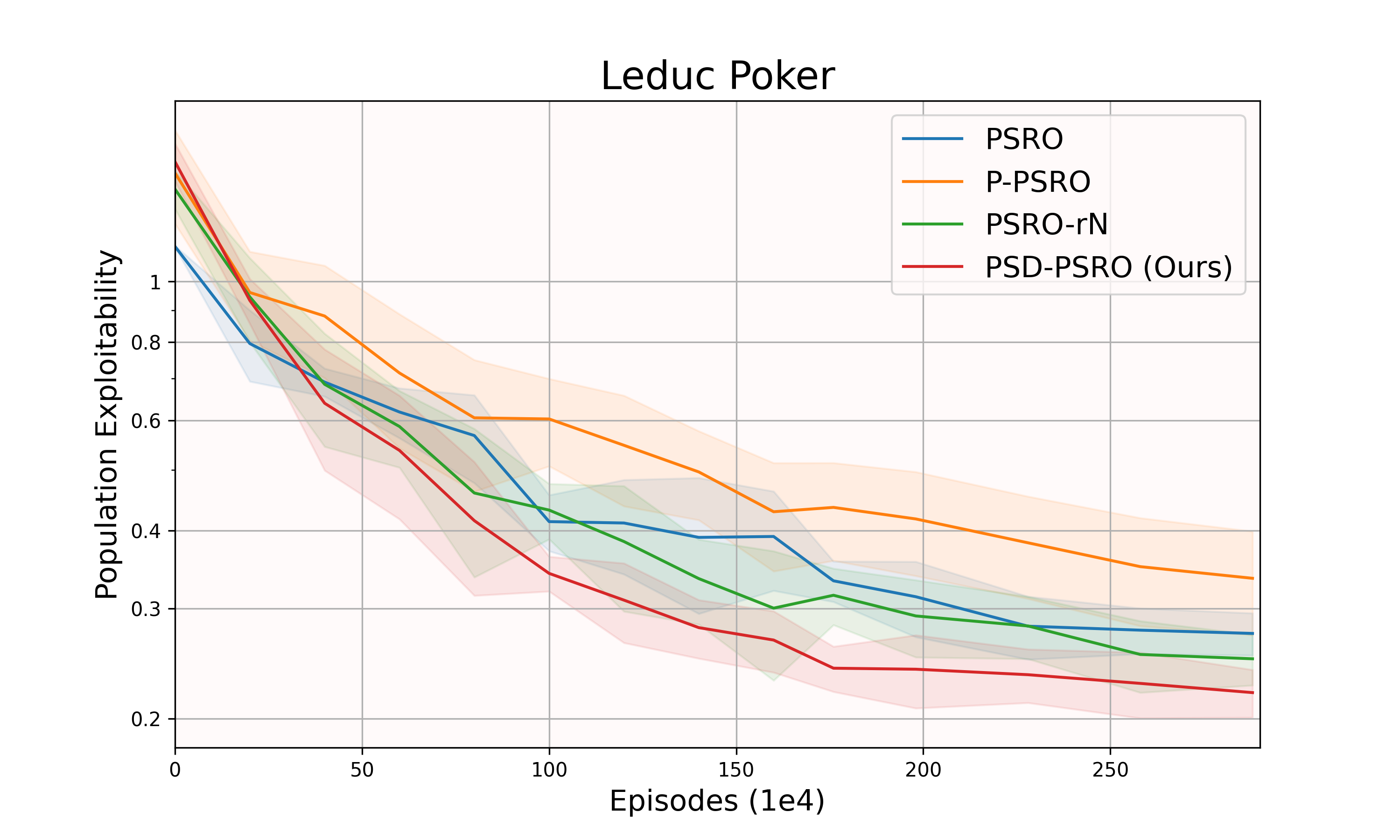}   
	\end{minipage}
}
\caption{
(a): \textit{Exploitability} of the meta NE.
(b): PE of the joint population.}
\label{fig:leduc} 
\end{figure}

\textbf{Leduc Poker} is a simplified poker \cite{southey2012bayes}, where the deck consists of two suits with three cards in each suit. 
Each player bets one chip as an ante, and a single private card is dealt to each player. 
Since DPP-PSRO cannot scale to the RL setting \cite{liu2021towards} and the code of BD\&RD-PSRO for complex games is not available, we compare PSD-PSRO only to P-PSRO, PSRO$_{rN}$ and PSRO. 
As demonstrated in Figure \ref{fig:leduc}, PSD-PSRO is more effective at reducing both the \textit{exploitability} and PE.

\begin{table}[ht!]
\begin{center}
\begin{sc}
\begin{small}
\begin{tabular}{|c|c|c|c|c|}
\hline
               & PSRO            & PSRO$_{rN}$      & P-PSRO          & PSD-PSRO(Ours)  \\ \hline
PSRO          & -               & 0.613$\pm$0.019 & 0.469$\pm$0.034 & 0.422$\pm$0.025 \\ \hline
PSRO$_{rN}$    & 0.387$\pm$0.019 & -               & 0.412$\pm$0.030 & 0.358$\pm$0.019 \\ \hline
P-PSRO        & 0.531$\pm$0.034 & 0.588$\pm$0.030 & -               & 0.370$\pm$0.031 \\ \hline
PSD-PSRO(Ours) & \textbf{0.578$\pm$0.025} & \textbf{0.642$\pm$0.019} & \textbf{0.630$\pm$0.031} & -               \\ \hline
\end{tabular}
\end{small}
\end{sc}
\end{center}
\caption{The win rate of the row agents against the column agents on Goofspiel.}
\label{tab:exp_goofspiel}
\vskip -0.1in
\end{table}
\textbf{Goofspiel} is commonly used as a large-scale multi-stage simultaneous move game. 
Goofspiel features strong non-transitivity, as every pure strategy can be exploited by a simple counter-strategy. 
We compare PSD-PSRO with PSRO, P-PSRO, and PSRO$_{rN}$ on Goofspiel with 5 point cards and 8 point cards settings. In the game with 5 point cards setting, due to the relatively small game size, we can calculate the exact \textit{exploitability}. We report the results in Appendix \ref{Apx:exp_gf5}, in which we see that PSD-PSRO reduces the \textit{exploitability} more effectively than other methods. In the game with 8 point cards setting, the game size is too large to show exact \textit{exploitability} for each iteration. In this setting, we provide a comparison among final solutions produced by different methods. We report the win rate between each two methods in Table \ref{tab:exp_goofspiel}, where
we can see that PSD-PSRO consistently beats existing methods with a $62\%$ win rate on average.


\section{Conclusions and Limitations}
In this paper, we point out a major and common weakness of existing diversity metrics in previous diversity-enhancing PSRO variants, which is their goal of enlarging the \textit{gamescape} does not necessarily result in a better approximation to a full game NE. Based on the insight that a larger PH means a lower PE (a better approximation to a NE), we develop a new diversity metric (\textit{policy space diversity}) that explicitly encourages the enlargement of a population's PH. We then develop a practical method to optimize our diversity metric using only state-action samples, which is derived based on the Bregman divergence on the sequence form of policies. We incorporate our diversity metric into the BR solving in PSRO to obtain PSD-PSRO. We present the convergence property of PSD-PSRO, and extensive experiments demonstrate that PSD-PSRO is significantly more effective in approximating a NE than state-of-the-art PSRO variants. 

The diversity regularization term $\lambda$ in PSD-PSRO plays an important role in balancing the `exploitation' and `exploration' in terms of enlarging the PH to approximate a NE. In this paper, other than showing that different problems have different optimal settings of $\lambda$, we have not discussed about the guidance of choosing $\lambda$ optimally. Although there has been related work on how to adapt a diversity regularization term using online bandits \cite{parker2020effective}, future work is still needed on how our \textit{policy space diversity} could benefit the approximation of a full game NE most. Another interesting direction of future work is to extend PSD-PSRO to larger scale games, such as poker \cite{brown2018superhuman}, Mahjong \cite{fu2021actor}, and dark chess \cite{zhang2021subgame}.

\medskip

{
\small


\bibliographystyle{plain}
\bibliography{psd}
}

\newpage
\appendix
\onecolumn

\section{Notation}
\label{Apx:notation}
We provide the notation in Table \ref{table:notations}.

\begin{table}[ht]
\caption{Notations}
\label{table:notations}
\vskip 0.15in
\begin{small}
\begin{center}
\begin{tabular}{rl}
\toprule
Notation & Meaning \\

\midrule
 & Extensive-Form Games \\
\midrule

$\mathcal{N}$  & $\mathcal{N} = \{1, 2\}$; set of players \\
$\mathcal{S}$  & set of information states \\
$s$  & $s \in \mathcal{S}$; state node \\
$\mathcal{A}(s)$ & set of actions allowed to be performed at state $s$ \\
$P$ & player function \\
$i$ / $-i$  & $i \in \mathcal{N}$; player $i$ / players except $i$ \\
$s_i$ & player $i$'s information state \\
$\mathcal{S}_i$ & $\mathcal{S}_i=\{s\in \mathcal{S}|P(s)=i\}$; set of player's information state \\
$\mathcal{A}_i$ & $\mathcal{A}_i = \cup_{s\in \mathcal{S}_i}\mathcal{A}(s)$; set of player $i$'s actions \\
$\pi_i$  & player $i$'s behaivoral strategy  \\
$\pi$ & $\pi= (\pi_i, \pi_{-i})$ strategy profile \\

$u_i(\pi)$ & $u_i(\pi)=u_i(\pi_i, \pi_{-i})$; the payoff for player $i$ \\

$\mathcal{BR}(\pi_{-i})$ & $\mathcal{BR} (\pi_{-i}):= \arg \max_{\pi_i' \in \Delta S_i} u(\pi_i', \pi_{-i})$; best response for player $i$ against players $-i$ \\

$\mathcal{E}(\pi)$ & $\mathcal{E}(\pi) = \frac{1}{N}
\sum_{i \in \mathcal{N}} [\max_{\pi'_i}u_i(\pi'_i, \pi_{-i}) 
- u_i(\pi_i, \pi_{-i})]$; exploitability for policy $\pi$ \\

\midrule
& Meta Games \\
\midrule
$i$ / $-i$  & $i \in \{1,2\}$; player $i$ / players except $i$ \\

$\pi_i^j$ / $\pi_i$ & $j$-th policy / any policy for player $i$ \\

$\pi$ & $\pi=(\pi_1, \pi_2)$; joint policy \\

$\Pi_i$ & $\Pi_i:=\{\pi_i^1,\pi_i^2,...,\}$; policy set for player $i$ \\

$\Pi$ &  $\Pi=\Pi_1 \times \Pi_2$; joint policy set  \\

$\sigma_i$ &  $\sigma_i \in \Delta_{\Pi_i}$; meta-policy over $\Pi_i$ \\

$\sigma$ & $\sigma=(\sigma_1, \sigma_2)$; joint meta-policy \\

$\mathbf{M}_{\Pi_i, \Pi_{-i}}$ & ${(\mathbf{M}_{\Pi_i, \Pi_{-i}})}_{jk}:=u_i(\pi_i^j, \pi_{-i}^k)$;
player $i$'s payoff table in the meta-games $(\Pi_i, \Pi_{-i})$ \\

$\Omega_i$  &
 the full set of all possible mixed strategies of player $i$ \\

$\mathcal{H}(\Pi_i)$ 
& $\mathcal{H}(\Pi_i) = \{\sum_{j} \beta_j \pi_i^j | \beta_j \geq 0, 
\sum_{j=1}^t \beta_j = 1 \}$; 
Policy Hull of $\Pi_i=\{\pi_i^1,\pi_i^2,...,\pi_i^t\}$ \\

$\mathcal{P}_i(\cdot,\cdot)$ & 
$\mathcal{P}_i(\Pi_i, \Pi_{-i})={\pi^*_i}^T \mathbf{M}_{\Pi_i, \Pi_{-i}} {\pi^*_{-i}}$ ;
Relative Population Performance for player $i$
\\

$\mathcal{PE}(\cdot)$ 
&   $\mathcal{PE}(\Pi) = 
    \frac{1}{2} \sum_{i=1,2}
    \max_{\Pi'_i \subseteq \Omega_i} \mathcal{P}_i(\Pi'_i, \Pi_{-i})$;
Population Exploitability \\

${\rm \mathcal{GS}} (\Pi_i|\Pi_{-i})$ & ${\rm \mathcal{GS}} (\Pi_i|\Pi_{-i})
= \{ \sum_j \alpha_j \mathbf{m}_j :
    \bm{\alpha} \geq 0, \bm{\alpha}^T \bm{1} = 1, 
    \mathbf{m}_j = {\mathbf{M}_{\Pi_i,\Pi_{-i}}}_{[j,:]}
    \}$; gamescape of $\Pi_i$ \\

\midrule
& PSRO / PSD-PSRO Training \\
\midrule

$\Pi_i^t$ & $\Pi_i^t=\{\pi_i^1,...,\pi_i^t\}$; the policy set of player $i$ at the $t$-th iteration in PSRO \\
$\sigma_i^t$ & $\sigma_i^t \in \Delta_{\Pi_i^t}$; the meta-policy of player $i$ at $t$-th iteration in PSRO \\
$\rho(\cdot)$ & policy representation \\
$\bm{x}_i$ & $\bm{x}_i \in \mathcal{X}_i \subseteq [0, 1]^{|\mathcal{S}_i\times \mathcal{A}_i|}$ sequence form representation for a policy \\
$\tau(s,a)$ & trajectory from the initial state to $(s,a)$ \\
$\mathcal{B}_{d_s}(\pi_i(s) \| \pi'_i(s))$ & Bregman divergence between $\pi_i(s)$ and $\pi'_i(s)$ \\
$\rm {Div}(\cdot)$ & diversity  on the policy set 
specified in different methods \\
$\rm dist(\cdot, \cdot)$ & distance function \\
$\lambda$ & diversity weight  \\



$\KL$ & Kullback-Leibler divergence \\


\bottomrule
\end{tabular}
\end{center}
\end{small}
\vskip -0.1in
\end{table}

\section{Theoretical Analysis}

\subsection{Derivation of Bregman Divergence}
\label{Apx:Bregman}

We define the Bregman divergence on the sequence-form representation of the policies. 
Given a strictly convex function $d$ defined on $\mathbb{R}^{|\mathcal{S}_i\times \mathcal{A}_i|}$, a Bregman divergence is defined as

\begin{equation}
    \mathcal{B}_{d}(\bm{x}_i \| \bm{x}'_i) =  d(\bm{x}_i) - d(\bm{x}'_i) - \left\langle \nabla d(\bm{x}'_i), \bm{x}_i - \bm{x}'_i \right\rangle,
\end{equation}
for any $\bm{x}_i, \bm{x}'_i \in \mathcal{X}_i$.
Intuitively, Bregman divergence is the gap between $d(\bm{x}_i)$ and its first-order approximation at $\bm{x}'_i$. It is non-negative (zero is achieved if and only if $\bm{x}_i = \bm{x}'_i$) and strictly convex in its first argument. So, it can be used to measure the difference between $\bm{x}_i$ and $\bm{x}'_i$. Examples of Bregman divergence include the Squared Euclidean distance, which is generated by the $l_2$ function, and the Kullback–Leibler (KL) divergence, which is generated by the negative entropy function.
For sequence-form policies, we use the dilated Distance-Generating Function (dilated DGF) \cite{hoda2010smoothing,farina2019optimistic,pmlr-v162-liu22e}:

\begin{equation}
d(\bm{x}_i) = \sum_{s, a \in {\mathcal{S}_i\times \mathcal{A}_i}} \bm{x}_i(s,a) \beta_s d_s\left(\frac{\bm{x}_i(s)}{\|\bm{x}_i(s)\|_1}\right),
\end{equation}
where $ \beta_s > 0$ is a hyper-parameter and $d_s$ is a strictly convex function defined on $\mathbb{R}^{|\mathcal{A}(s)|}$, for  example, the negative entropy function. It is proven that for the dilated DGF (Lemma D.2 in \cite{pmlr-v162-liu22e} ), we have the Bregman divergence 
\begin{equation}
 \mathcal{B}_{d}(\bm{x}_i \| \bm{x}'_i) = \sum_{s, a \in {\mathcal{S}_i\times \mathcal{A}_i}} \bm{x}_i(s,a) \beta_s \mathcal{B}_{d_s}\left(\frac{\bm{x}_i(s)}{\|\bm{x}_i(s)\|_1} \Bigg\| \frac{\bm{x}'_i(s)}{\|\bm{x}'_i(s)\|_1}\right).
\end{equation}

The result allows us to compute the Bergman divergence using the behavior policy instead of explicitly using the sequence-form representation, which is prohibited in large-scale games. Formally, we define the distance of two policies $\pi_i, \pi'_i$ as 

\begin{equation}
\begin{aligned}
{\rm dist}(\pi_i, \pi'_i):=  \mathcal{B}_{d}(\bm{x}_i \| \bm{x}'_i)
=  \sum_{s, a \in \mathcal{S}_i \times \mathcal{A}_i}  \left(\prod_{\widetilde{s}_i, \widetilde{a} \in \tau(s, a)} \pi_i(\widetilde{a}|\widetilde{s}_i)\right) \beta_s \mathcal{B}_{d_s}(\pi_i(s) \| \pi'_i(s)).
\end{aligned}
\end{equation}

\subsection{Proof of Proposition \ref{prop:PE}}
\label{Apx:pf_PE}
\begin{proof}
Recall that the PE of joint policy set $\Pi$ in two-player zero-sum games can be written into
\begin{align}
    \mathcal{PE}(\Pi) &= 
    \frac{1}{2} \sum_{i=1,2}
    \max_{\Pi'_i \subseteq \Omega_i} \mathcal{P}_i(\Pi'_i, \Pi_{-i}) \notag \\ 
    &= \frac{1}{2}( 
    \max_{\Pi'_i \subseteq \Omega_i} \mathcal{P}_i(\Pi'_i, \Pi_{2}) - 
    \min_{\Pi'_{-i} \subseteq \Omega_{2}} \mathcal{P}_i(\Pi_i, \Pi'_{2})
    ),
    \notag
\end{align}
since $\mathcal{P}_{-i}(\Pi'_{-i}, \Pi_{1})=-\mathcal{P}_i(\Pi_i, \Pi'_{2})$.

\begin{enumerate}

\item  We can check this property directly, since
\begin{align}
    \mathcal{PE}(\Pi) &= 
    \frac{1}{2}( 
    \max_{\Pi'_i \subseteq \Omega_i} \mathcal{P}_i(\Pi'_i, \Pi_{2}) - 
    \min_{\Pi'_{2} \subseteq \Omega_{2}} \mathcal{P}_i(\Pi_i, \Pi'_{2})) \notag \\
    &\geq
    \frac{1}{2}( 
     \mathcal{P}_i(\Pi_i, \Pi_{2}) - 
     \mathcal{P}_i(\Pi_i, \Pi_{2})) = 0 \notag
\end{align}

\item 
We first note that
$\mathcal{H}(\Pi_{1}) \times \mathcal{H}(\Pi_{2}) \subseteq \mathcal{H}(\Pi'_i) \times \mathcal{H}(\Pi'_{-i})$ means 
$\mathcal{H}(\Pi_{1}) \subseteq \mathcal{H}(\Pi'_i)$ and 
$\mathcal{H}(\Pi_{2}) \subseteq \mathcal{H}(\Pi'_{-i})$.

$\mathcal{H}(\Pi_{2}) \subseteq \mathcal{H}(\Pi'_{2})$
suggests that $\forall\ \Phi_i \in \Omega_i$, we have 
$\mathcal{P}_i(\Phi_i, \Pi_{2}) \geq \mathcal{P}_i(\Phi_i, \Pi'_{2})$, which is a property of relative population performance \cite{balduzzi2019open}.
Take the maximum for both sides, we have,
\begin{equation}
\label{equ:_max}
    \max_{\Phi_i \subseteq \Omega_i} \mathcal{P}_i(\Phi_i, \Pi_{-i}) \geq
    \max_{\Phi_i \subseteq \Omega_i} \mathcal{P}_i(\Phi_i, \Pi'_{-i})
\end{equation}
Similarly, using the condition $\mathcal{H}(\Pi_{1}) \subseteq \mathcal{H}(\Pi'_i)$, we have,
\begin{equation}
\label{equ:_min}
    \min_{\Phi_{-i} \subseteq \Omega_{-i}} \mathcal{P}_i(\Pi_i, \Phi_{-i}) \leq
    \min_{\Phi_{-i} \subseteq \Omega_{-i}} \mathcal{P}_i(\Pi'_i, \Phi_{-i})
\end{equation}
Combining Equation \ref{equ:_max} and \ref{equ:_min}, we obtain $\mathcal{PE}(\Pi) \geq \mathcal{PE}(\Pi')$

\item
By the monotonicity of PE, we have
\begin{equation}
\label{equ:_mono_pe}
    \mathcal{PE}(\Pi) = \frac{1}{2}( 
    \mathcal{P}_i(\Omega_i,  \Pi_{-i}) - 
    \mathcal{P}_i(\Pi_i, \Omega_{-i}))
\end{equation}
Since $\Pi_{-i}=\{\pi_{-i}\}$ contains only one policy, by the definition of relative population performance, we have, 
\begin{align}
\label{equ:_single_pop}
    \mathcal{P}_i(\Omega_i, \{\pi_{-i}\}) = u(BR(\pi_{-i}), \pi_{-i});
    \mathcal{P}_i(\{\pi_i\}, \Omega_{-i})) = u(\pi_i, BR(\pi_i))
\end{align}
Substituting Equation \ref{equ:_single_pop} into Equation \ref{equ:_mono_pe}, we obtain

\begin{equation}
\label{equ:_pe3}
\mathcal{PE}(\Pi)=
\mathcal{E}(\pi)
\end{equation}

\item 
By Equation \ref{equ:_mono_pe} and the definition of the relative population performance, there exists NE 
$(\sigma_i, \pi_{-i})$ in game $\mathbf{M}_{\Omega_i, \Pi_{-i}}$ and NE 
$(\pi_i, \sigma_{-i})$ in game $\mathbf{M}_{\Pi_i, \Omega_{-i}}$
s.t.
\begin{equation}
    \mathcal{PE}(\Pi) = \frac{1}{2}(
    u(\sigma_i, \pi_{-i}) - 
    u(\pi_i, \sigma_{-i}))
    \notag
\end{equation}
By the definition of NE, we have,
\begin{align}
\label{equ:_pe4}
    \mathcal{PE}(\Pi) &= \frac{1}{2}(
    \max_{\pi'_i\in \Omega_i} u(\pi'_i, \pi_{-i}) - 
    \min_{\pi'_{-i}\in \Omega_{-i}} u(\pi_i, \pi'_{-i})) \notag \\
    &= \mathcal{E}(\pi)  
\end{align}

We now prove $(\pi_i, \pi_{-i})$ is also the least exploitable policy in the joint policy set.
$\forall\ \pi'=(\pi'_i, \pi'_{-i}) \in \Pi_i \times \Pi_{-i}$
, we have, 
\begin{equation}
\mathcal{E}(\pi)=
\mathcal{PE}(\Pi) 
\leq
\mathcal{PE}(\{\pi'_i\} \times \{\pi'_{-i}\})
= \mathcal{E}(\pi')
\end{equation}
where the inequality is from the monotonicity of PE. 

Hence, $(\pi_i, \pi_{-i})$ is the least exploitable policy in the Policy Hull of the population.
\item
\textbf{Necessity}
If $(\sigma^*_i, \sigma^*_{-i})$ is the NE of the game, we can regard the $\{\sigma^*_i\}$ and $\{\sigma^*_{-i}\}$ as the populations which contain a single policy. 
Then by property $2$, we have, 
\begin{align}
    \mathcal{PE}(\Pi) &\leq 
    \mathcal{PE}(\{\sigma^*_i\} \times \{\sigma^*_{-i}\}) \notag \\
    &= \frac{1}{2}( 
    \mathcal{P}_i(\Omega_i, \{\sigma^*_{-i}\}) - 
    \mathcal{P}_i(\{\sigma^*_i\}, \Omega_{-i}))
    \notag \\
    &= \frac{1}{2}(
    u(\sigma^*_i, \sigma^*_{-i}) - 
    u(\sigma^*_i, \sigma^*_{-i})) = 0 \notag
\end{align}
Note that by property $1$, we have $\mathcal{PE}(\Pi) \geq 0$, so we have $\mathcal{PE}(\Pi)=0$.

\textbf{Sufficiency} Using the conclusion from 3, It is easy to check the sufficiency.
If $\mathcal{PE}(\Pi)=0$, there exists 
$\sigma=(\sigma_i, \sigma_{-i})$
s.t. $\mathcal{E}(\sigma)=0$, which means $\sigma$ is the NE of the game. 

\end{enumerate}
\end{proof}

\subsection{Proof of Proposition \ref{prop:PSRO}}
\label{Apx:pf_prop_PSRO}
\begin{proof}
Recall that at each iteration $t$, the PSRO calculates the NE $(\sigma^{t}_i, \sigma^{t}_{-i})$ of the meta-game $\mathbf{M}_{\Pi_i^{t},\Pi_{-i}^{t}}$, then adds the new policy 
$\pi^{t} = (BR(\sigma^{t}_{-i}), BR(\sigma^{t}_i))$
to the population.

If $PE(\Pi^{t}) > 0$, then by the second property in Proposition \ref{prop:PE}, we know the population does not contain the NE of the game. 
Therefore, since $(\sigma^{t}_i, \sigma^{t}_{-i}) \in \Pi^{t}$, it is not the NE of the full game, which means at least one of the conditions holds: (1) $BR(\sigma^{t}_{-i}) \notin \mathcal{H}(\Pi^{t}_i)$; (2) $BR(\sigma^{t}_i) \notin \mathcal{H}(\Pi^{t}_{-i})$, 
which suggests the enlargement of the Policy Hull.

In finite games, since the PSRO adds pure BR at each iteration, this  Policy Hull extension stops at some iteration, saying $T$,
in which case we have $BR(\sigma^{T}_{-i}) \in \mathcal{H}(\Pi_i^{T})$ and $ BR(\sigma^{T}_i) \in \mathcal{H}(\Pi_{-i}^{T})$ (by above analysis).
Let $(\pi_i^{T+1}, \pi^{T+1}_{-i}):= (BR(\sigma^{T}_{-i}), BR(\sigma^{T}_i))$,
we claim that $(\pi_i^{T+1}, \pi^{T+1}_{-i})$ is the NE of the whole game. 
To see this, note that $(\sigma^{T}_i, \sigma^{T}_{-i})$ is the NE of the game $\mathbf{M}_{\Pi_i^{T},\Pi_{-i}^{T}}$ and 
$\pi_i^{T+1} \in \mathcal{H}(\Pi_i^{T})$, 
hence we have, 
\begin{equation}
\label{equ:_u1}
u(\sigma^{T}_i, \sigma^{T}_{-i})
\geq u(\pi^{T+1}_i, \sigma^{T}_{-i}).
\end{equation}
On the other hand, since $\pi^{T+1}_i$ is the BR to $\sigma^{T}_{-i}$, we have,
\begin{equation}
\label{equ:_u2}
u(\pi^{T+1}_i, \sigma^{T}_{-i})
\geq u(\sigma^{T}_i, \sigma^{T}_{-i}).
\end{equation}

Combining Equation \ref{equ:_u1} and \ref{equ:_u2}, we obtain $u(\pi^{T+1}_i, \sigma^{T}_{-i}) = u(\sigma^{T}_i, \sigma^{T}_{-i})$, which suggests $\sigma^{T}_i$ is also the BR of $\sigma^{T}_{-i}$. Similarly, we can prove that $\sigma^{T}_{-i}$ is the BR of $\sigma^{T}_i$, and hence $(\sigma^{T}_i, \sigma^{T}_{-i})$ is the NE of the game.

Finally, by the fourth property in Proposition \ref{prop:PE}, we have $\mathcal{PE}(\Pi^{T})=0$.
\end{proof}

\subsection{Proof of Proposition \ref{prop:PSRO-PH}}
\label{Apx:pf_prop_PSRO-PH}
This is an intermediate result in the proof of \ref{prop:PSRO} (Appendix \ref{Apx:pf_prop_PSRO}).

\subsection{Proof of Theorem \ref{prop:gamescape}}
\label{Apx:pf_GS}
\begin{proof}
We prove this theorem by providing the counterexamples. Let's first focus on the following two-player zero-sum game:
\begin{equation}
\bordermatrix{
    & R & P & S \cr
  R & 0 & -1 & 1 \cr
  P & 1 & 0 & -1 \cr
  S & -1 & 1 & 0 \cr
  R' & 1 & -1 & 1 \cr
  P' & 1 & 1 & -1 \cr
  S' & -1 & 1 & 1\cr
}
\end{equation}
The game is inspired by Rock-Paper-Scissors. We add high-level strategies $\{R',P',S'\}$ to player $i$ in the original R-P-S game. The high-level strategy can beat the low-level strategy with the same type. 

Let $\Pi_i=\{S, R', P'\}$ and
$\Pi_{-i} = \{R, P, \frac{1}{2}R+\frac{1}{2}P\}$, where $\frac{1}{2}R+\frac{1}{2}P$ means playing $[R,P]$ with probability $[\frac{1}{2},\frac{1}{2}]$.
The payoff matrix $\mathbf{M}_{\Pi_i,\Pi_{-i}}$is
\begin{equation}
    \bordermatrix{
     & R  & P & \frac{1}{2}R+\frac{1}{2}P \cr
  S  & -1 & 1 & 0 \cr
  R' & 1 & -1 & 0 \cr
  P' & 1 & 1  & 1 \cr
}
\end{equation}

It is in player $i$'turn to choose his new strategy. Considering two candidates $\pi_i^1=S'$ and $ \pi_{-i}^2=\frac{1}{2}R+\frac{1}{2}R'$
(There are many choices for $\pi_i^2$ to construct the counterexample, we just take one for example).

The payoff of $\pi_i^1$ against $\Pi_{-i}$ is $(-1,1,0)$, the same as the first row in $\mathbf{M}_{\Pi_i,\Pi_{-i}}$
The payoff of $\pi_i^2$ is $(\frac{1}{2},-1,-\frac{1}{4})$, which is out of the gamescape of $\Pi_i$.

Denote $\Pi_i^1 = \Pi_i \cup \{\pi_i^1\}$, $\Pi_i^2=\Pi_i \cup \{\pi_i^2\}$. 
By the analysis above, we have 
\begin{equation}
    \mathcal{GS} (\Pi_i^1|\Pi_{-i}) \subsetneqq \mathcal{GS} (\Pi_i^2|\Pi_{-i})
\end{equation}
Hence, to promote the diversity defined on the payoff matrix and aim to enlarge the gamescape \cite{balduzzi2019open,perez2021modelling}, we should choose $\pi_i^2$ to add. However, in this case, we show adding $\pi_i^1$ is more helpful to decrease PE.

Denote $\Pi^1=\Pi_i^1\times\Pi_{-i}$ and $\Pi^2=\Pi_i^2\times\Pi_{-i}$, by Equation \ref{equ:_mono_pe} and a few calculation, we have,
\begin{equation}
    \mathcal{PE} (\Pi^1)
    - \mathcal{PE} (\Pi^2)
    = -\frac{1}{2}\mathcal{P} (\Pi_i^1, \Omega_{-i}) 
    + \frac{1}{2} \mathcal{P} (\Pi_i^2, \Omega_{-i}) 
    \approx -0.17 + 0.10 < 0,
\end{equation}
i.e., $\mathcal{PE} (\Pi^1) < \mathcal{PE} (\Pi^2)$. We can see that adding a new policy following the guidance of enlarging gamescape may not be the best choice to decrease PE. 

Back to the Proposition \ref{prop:gamescape}, since we have found an example that, 
\begin{equation}
\label{equ:_example}
    \mathcal{GS} (\Pi_i^1|\Pi_{-i}) \subsetneqq \mathcal{GS} (\Pi_i^2|\Pi_{-i})
    \quad {\rm but} \quad
    \mathcal{PE} (\Pi^1) < \mathcal{PE} (\Pi^2)
\end{equation}
hence, 
\begin{equation}
\label{equ:_gs_1}
    {\rm \mathcal{GS}} (\Pi_i^1|\Pi_{-i}) \subsetneqq {\rm \mathcal{GS}} (\Pi_i^2|\Pi_{-i}) 
\nRightarrow
\mathcal{PE} (\Pi^1) \geq \mathcal{PE} (\Pi^2)
\end{equation}

Rewrite the example in Equation \ref{equ:_example}, we have, 
\begin{equation}
    \mathcal{PE} (\Pi^2) \geq \mathcal{PE} (\Pi^1)
    \quad {\rm but} \quad
    \mathcal{GS} (\Pi_i^2|\Pi_{-i}) \supsetneqq \mathcal{GS} (\Pi_i^1|\Pi_{-i})
\end{equation}
Rename the superscripts (exchange the identity of $1,2$ in the superscript), we obtain,
\begin{equation}
\label{equ:_gs_2}
    \mathcal{PE} (\Pi^1) \geq \mathcal{PE} (\Pi^2)
    \nRightarrow
    {\rm \mathcal{GS}} (\Pi_i^1|\Pi_{-i}) \subsetneqq {\rm \mathcal{GS}} (\Pi_i^2|\Pi_{-i}) 
\end{equation}
Combine Equation \ref{equ:_gs_1} and \ref{equ:_gs_2}, we complete the proof.
\end{proof}

\textbf{Remark}
We offer another example in a symmetric game, proving that this proposition is also valid in this type of game.
\begin{equation}
\bordermatrix{
    & A & B & C & D & E\cr
  A & 0 & -1 & -0.5 & -1 & -4\cr
  B & 1 & 0 & 0.5 & -1 & -4 \cr
  C & 0.5 & -0.5 & 0 & 0 & 4\cr
  D & 1 & 1 & 0 & 0 & -4 \cr
  E & 4 & 4 & -4 & 4 & 0\cr
}
\end{equation}
Current policy set $\Pi_i=\Pi_{-i}=\{A,B\}$, the payoff is
\begin{equation}
\bordermatrix{
    & A & B \cr
  A & 0 & -1 \cr
  B & 1 & 0 \cr
}
\end{equation}
Considering $\pi_i^1=C, \pi_i^2=D$, the payoff of $\pi_i^1=C$ against $\Pi_{-i}$ is $(0.5,-0.5)$, contained in $\mathcal{GS}(\Pi_i|\Pi_{-i})$, while the payoff of $\pi_i^2=D$ against $\Pi_{-i}$ is $(1,1)$, which is out of $\mathcal{GS}(\Pi_i|\Pi_{-i})$. 

Denote $\Pi_i^1 = \Pi_i \cup \{\pi_i^1\}$, $\Pi_i^2=\Pi_i \cup \{\pi_i^2\}$, $\Pi^1=\Pi_i^1\times\Pi_{-i}$ and $\Pi^2=\Pi_i^2\times\Pi_{-i}$, we have $\mathcal{GS} (\Pi_i^1|\Pi_{-i}) \subsetneqq \mathcal{GS} (\Pi_i^2|\Pi_{-i})$. However, 

\begin{equation}
    \mathcal{PE}(\Pi^1) - \mathcal{PE}(\Pi^2) =
    -\frac{1}{2} \mathcal{P}_i(\Pi_i^1,\Omega_{-i}) +
    \frac{1}{2} \mathcal{P}_i(\Pi_i^2, \Omega_{-i}) 
    \approx 0.17 - 2
    < 0
\end{equation}
Hence, similar to the above analysis, we can prove the Proposition \ref{prop:gamescape} in symmetric games.

\subsection{Proof of Theorem \ref{prop:PSD_PSRO-PH}}
\label{Apx:pf_PSD_PSRO}
\begin{proof}
Similar to Appendix \ref{Apx:pf_prop_PSRO}, we denote the meta-strategy at iteration $t$ as $(\sigma^{t}_i, \sigma^{t}_{-i})$.
At iteration $t$ of PSRO, if $\mathcal{PE}(\Pi^{t}) > 0$, then from Appendix \ref{Apx:pf_prop_PSRO}, we know that at least  one of the conditions holds: (1) $BR(\sigma^{t}_{-i}) \notin \mathcal{H}(\Pi^{t}_i)$; (2) $BR(\sigma^{t}_i) \notin \mathcal{H}(\Pi^{t}_{-i})$. Without loss of generality,  we assume 
\begin{equation}
\label{equ:_br}
BR(\sigma^{t}_{-i}) \notin \mathcal{H}(\Pi^{t}_i).
\end{equation}
Then $\forall\ \hat{\sigma}_i \in \mathcal{H}(\Pi^{t}_i)$ and $\lambda > 0$, 
\begin{align}
\label{equ:_br_obj}
    &u(\hat{\sigma}_i, \sigma^{t}_{-i}) + 
    \lambda
    \min_{\pi_i^k\in \mathcal{H}(\Pi_i^t)}
    \mathbb{E}_{s_i \sim \hat{\sigma}_i, b_{2}} [\KL(\hat{\sigma}_i(s_i) \| \pi_{i}^{k}(s_i))]
    \notag \\
    =& u(\hat{\sigma}_i, \sigma^{t}_{-i}) + 0 \notag \\
    < &u(BR(\sigma^{t}_{-i}), \sigma^{t}_{-i}) \notag \\
    < &u(BR(\sigma^{t}_{-i}), \sigma^{t}_{-i}) + 
    \lambda
    \min_{\pi_i^k\in \mathcal{H}(\Pi_i^t)}
    \mathbb{E}_{s_i \sim \hat{\sigma}_i, b_{2}} [\KL(\hat{\sigma}_i(s_i) \| \pi_{i}^{k}(s_i))]
\end{align}
where the first equation is because $\hat{\sigma}_i \in \mathcal{H}(\Pi^{t}_i)$ and the last two inequalities is due to the definition of BR and Equation \ref{equ:_br} respectively.

From Equation \ref{equ:_br_obj}, we know that any policy in $\mathcal{H}(\Pi^{t}_i)$ will not be considered as the optimal solution in Equation \ref{equ:br_object2}, since they are dominated by at least one of the policy that out of $\mathcal{H}(\Pi^{t}_i)$, i.e., the BR.  
Hence, before PE reaches 0, solving Equation \ref{equ:br_object2} will find a policy not in $\mathcal{H}(\Pi^{t}_i)$, which guarantees the enlargement of the Policy Hull. 
Once we can not enlarge the Policy Hull by adding the optimal solution in  Equation \ref{equ:br_object2}, the PE of the Policy Hull must be $0$, otherwise the optimal solution will be outside of the Policy Hull.
\end{proof}

\section{Algorithm for PSD-PSRO}
\label{Apx:algo}
We provide the algorithm for PSD-PSRO in Algorithm \ref{algo:psd-psro}, which generates the algorithm of PSRO to incorporate the estimation of the diversity in Equation \ref{equ:br_object2}.

\begin{algorithm}[tb]
   \caption{PSD-PSRO}
   \label{algo:psd-psro}
\begin{algorithmic}
    \State {\bfseries Input:} Initial policy sets $\Pi^1_i, \Pi^1_{-i}$
    \State Compute payoff matrix $\mathbf{M}_{\Pi^1_i, \Pi^1_{-i}}$
    \State Initialize meta policies $\sigma^1_i \sim {\rm UNIFORM}(\Pi^1_{i})$
   
    \For{\textit{t} in $\{1,2,...\}$}
        \For{\textit{player} $i \in \{1,2\}$}
            \State Initialize $\pi_i = \pi^{t}_i$ and sample $K$ policies $\{\pi_i^k\}_{k=1}^K$ from Policy Hull $\Pi^t_i$ 
            \For{\textit{many episodes}}
                \State Sample $\pi_{-i} \sim \sigma^t_{-i}$
                and collect the trajectory $\tau$ by playing $\pi_i$ against $\pi_{-i}$ 
                \State Discount the terminal reward $u(\pi_i, \pi_{-i})$ to each state as the extrinsic reward $r_1$
                \State Discount $R^{KL}(\tau)$ to each state as the intrinsic reward $r_2$
                \State Store $(s, a, s', r)$ to the buffer, where $s'$ is the next state and $r=r_1+r_2$
                \State  Estimate the gradient in Equation \ref{equ:estimate_div} with the samples in the buffer and update  $\pi_i$
            \EndFor
            \State $\pi^{t+1}_{i} = \pi_i$ and $\Pi^{t+1}_i = \Pi^t_i \cup \{\pi_i^{t+1}\}$
        \EndFor
        \State Compute missing entries in $\mathbf{M}_{\Pi^t_i, \Pi^t_{-i}}$
        \State Compute meta-strategies $(\sigma^{t+1}_i, \sigma^{t+1}_{-i})$ from $\mathbf{M}_{\Pi^t_i, \Pi^t_{-i}}$
    \EndFor
    \State {\bfseries Output:} current meta-strategy for each player.

\end{algorithmic}
\end{algorithm}

\section{Additional Study}
\subsection{PE on Leduc}
\label{Apx:exp_gf5}

In Figure \ref{fig:goofspiel_5}, we report the \textit{exploitability} of the meta NE on Goofspiel with 5 point cards setting.
\begin{figure}[ht!]	
\centering          
\includegraphics[width=0.8\linewidth]{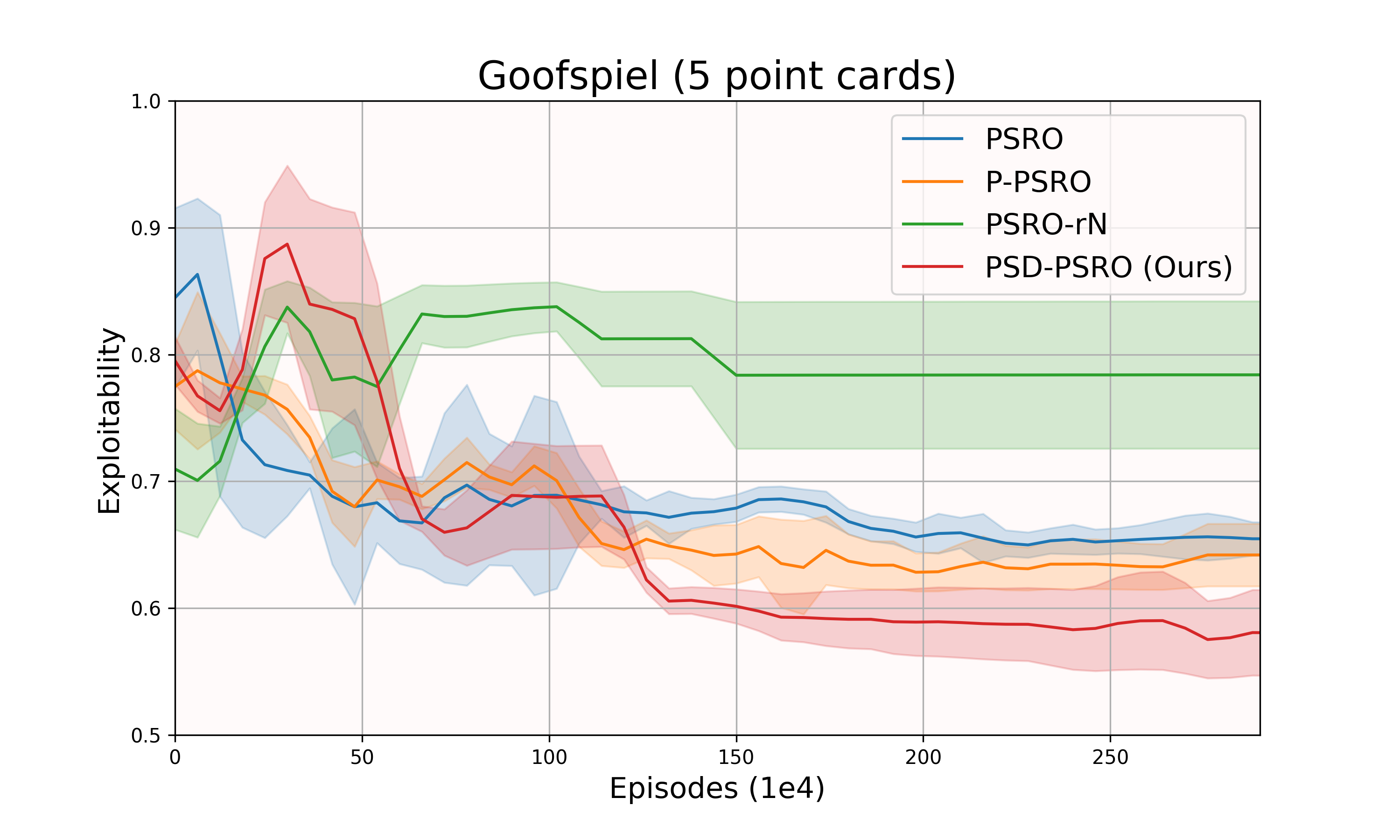}   
\caption{Exploitability on Goofspiel (5 point cards).}
\label{fig:goofspiel_5}
\end{figure}

\subsection{Ablation Study on Diversity Weight}
\label{Apx:ablation_lambda}

\begin{table}[t]
\caption{The \textit{exploitability} $\times 100$ for populations generated by PSD with different diversity weights in Non-transitive Mixture Games. The standard error is calculated by running $5$ experiments.}
\label{table:ablation_nmm}
\vskip 0.15in
\begin{center}
\begin{sc}
\scriptsize
\begin{tabular}{cccccccc}
\toprule
$\lambda$ & 0 & 1 & 2 &3 & 5 & 10 & 50   \\
\midrule
Exp  & 40.88 $\pm$ 2.18
& 2.81 $\pm$ 0.66
& \textbf{1.72 $\pm$ 0.54}
& 2.87 $\pm$ 0.70
& 3.73 $\pm$ 0.77
& 4.31 $\pm$ 0.67
& 8.22 $\pm$ 1.73
\\
\bottomrule
\end{tabular}
\end{sc}
\end{center}
\vskip -0.1in
\end{table}

\begin{table}[t]
\caption{The \textit{exploitability} for populations generated by PSD with different diversity weights in Leduc Poker. The standard error is calculated by running $3$ experiments.}
\label{table:ablation_leduc}
\vskip 0.15in
\begin{center}
\begin{sc}
\scriptsize
\begin{tabular}{ccccccc}
\toprule
$\lambda$ & 0 & 0.1 & 0.5 & 1 & 2 & 5  \\
\midrule
Exp  & 0.590 $\pm$ 0.017
& 0.403 $\pm$ 0.027
& \textbf{0.356 $\pm$ 0.012}
& 0.398 $\pm$ 0.030
& 0.412 $\pm$ 0.042
& 0.664 $\pm$ 0.062
\\
\bottomrule
\end{tabular}
\end{sc}
\end{center}
\vskip -0.1in
\end{table}

We conduct the ablation study on the sensitivity of the diversity weight $\lambda$ on the Non-Transitive Mixture Game and Leduc Poker. In Table \ref{table:ablation_nmm} and Table \ref{table:ablation_leduc}, we show the \textit{exploitability} of the final population against different diversity weights in Non-Transitive Mixture Game and Leduc Poker, respectively.
We can see that the $\lambda$ can significantly affect the training convergence and the final \textit{exploitability}. A large or small $\lambda$ can lead to high \textit{exploitability}.
This suggests that properly combing the BR (implicitly enlarges the Policy Hull) and PSD (explicitly enlarges the Policy Hull) is critical to improving the efficiency at reducing the \textit{exploitability}. 



\subsection{The Visualization of PH}
\label{Apx:visual_PH}

we demonstrate the expansion of PH on the Disc game \cite{balduzzi2019open}. In Disc game, each pure strategy is represented by a point in the circle. Due to the linearity in the payoff function, a mixed strategy is equivalent to a pure strategy and thus can be equivalently visualized. Figure \ref{fig:vis_PH} shows the expansion of PH on the Disc game \cite{balduzzi2019open}. It demonstrates that PSD-PSRO is more effective than other PSRO variants at enlarging the PH.

\begin{figure}[ht!]
\centerline{
\includegraphics[width=1\linewidth]{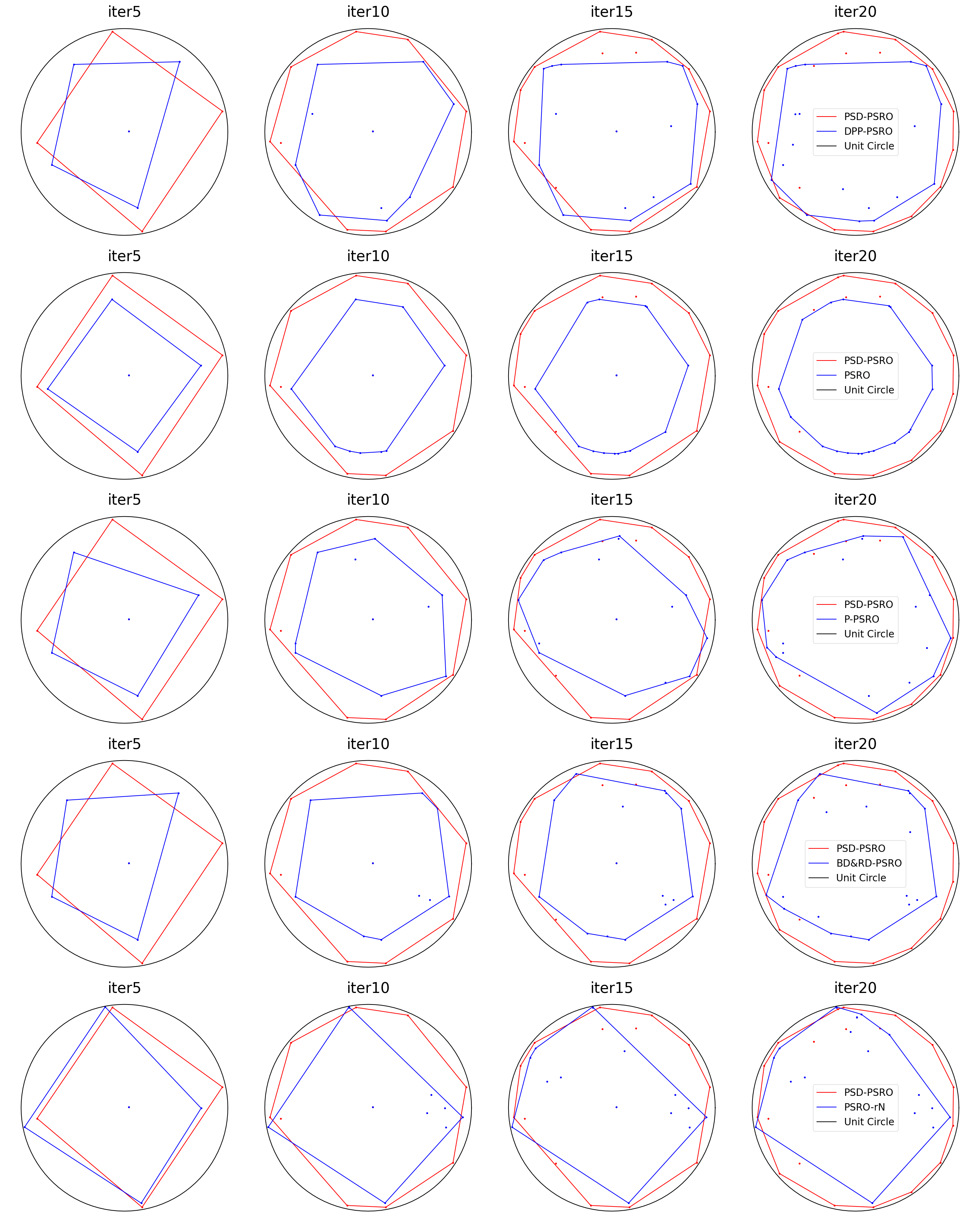}
}

\caption{Equivalent visualization of PH of different methods across iterations in the Disc game}
\label{fig:vis_PH}
\end{figure}

\subsection{Time Consumption in PSD-PSRO}
We run an experiment on Leduc poker to estimate the time portion of different parts in our algorithm PSD-PSRO. We can see from Table \ref{table:time_portion} that the majority of time in PSD-PSRO is in BR solving and computing the payoff matrix, which is shared by PSRO and all its variants. The time spent on calculating our diversity metric is not significant, which accounts for $7.9\%$ computational time.

\begin{table}[t]
\caption{Time Consumption in PSD-PSRO on Leduc poker}
\label{table:time_portion}
\vskip 0.15in
\begin{center}
\begin{sc}
\begin{tabular}{|c|c|}
\hline
Main Part of PSD-PSRO              & Percentage \\ \hline
compute NE of the meta-game        & 0.9\%      \\ \hline
compute the approximate BR         & 59.9\%     \\ \hline
compute the extra diversity metric & 7.9\%      \\ \hline
compute the payoff matrix          & 31.3\%     \\ \hline
\end{tabular}
\end{sc}
\end{center}
\vskip -0.1in
\end{table}

\section{Benchmark and Implementation Details}
\label{Apx:Exp_Details}

\subsection{AlphaStar888}
AlphaStar888 is an empirical game generated from the process of solving Starcraft II \cite{vinyals2019grandmaster}, which contains a payoff table for $888$ RL policies. The occupancy measure is reduced to the action distribution as the game can be regarded as a single-state Markov Game.
We choose diversity weight $\lambda$ to be $0.85$ in this game.

\subsection{Non-Transitive Mixture Game}
\label{Apx:NNM}

This game consists 7 equally-distanced Gaussian humps on the
2D plane. 
We represent each strategy by a point on the 2D plane, which is equivalent to the weights (the likelihood of that point in each Gaussian distribution) that each player puts on the humps.
The payoff of the game which contains both non-transitive and transitive components is
$\pi_i^T \mathcal{S} \pi_{-i} + 
\frac{1}{2} \sum_{k=1}^7(\pi_i^k-\pi_{-i}^k)$, where 
\begin{equation}
\mathcal{S} =    
    \begin{bmatrix}
    0 & 1 & 1 & 1 & -1 & -1 & -1 \\ 
    -1 & 0 & 1 & 1 & 1 & -1 & -1 \\ 
    -1 & -1 & 0 & 1 & 1 & 1 & -1 \\ 
    -1 & -1 & -1 & 0 & 1 & 1 & 1 \\ 
    1 & -1 & -1 & -1 & 0 & 1 & 1 \\ 
    1 & 1 & -1 & -1 & -1 & 0 & 1 \\ 
    1 & 1 & 1 & -1 & -1 & -1 & 0 \\ 
    \end{bmatrix}.
\notag
\end{equation}
We choose diversity weight $\lambda$ to be $1.9$ in this game.

\subsection{Leduc Poker}

Since the DPP-PSRO needs evolutionary updates and cannot scale to the RL setting, and the code for BD\&RD-PSRO on complex games, where f-divergence on occupancy measure is approximated using prediction errors of neural networks, is not available, we compare PSD-PSRO with P-PSRO, PSRO$_{rN}$ and PSRO in this benchmark. 
We implement the PSRO framework with Nash solver, using PPO as the oracle agent. Hyper-parameters are shown in Table \ref{table:hyperparam_leduc}.

\begin{table}[ht]
\begin{center}
\renewcommand\arraystretch{1.1}
\caption{
Hyperparameters for Leduc poker
}
\label{table:hyperparam_leduc}
\begin{tabular}{ll}
\toprule
Hyperparameters & Value \\
\hline
\emph{Oracle} & \\
~~~~~ Oracle agent & PPO \\
~~~~~ Replay buffer size & $10^4$ \\
~~~~~ Mini-batch size & 512 \\
~~~~~ Optimizer &  Adam      \\
~~~~~ Learning rate  & $3\times10^{-4}$   \\
~~~~~ Discount factor ($\gamma$) & 0.99 \\
~~~~~ Clip & 0.2 \\
~~~~~ Max Gradient Norm & 0.05 \\
~~~~~ Policy network & MLP (state\_dim-256-256-256-action\_dim)  \\
~~~~~ Activation function in MLP & ReLu \\

\hline
\emph{PSRO} & \\
~~~~~ Episodes for each BR training & $2\times10^{4}$ \\
~~~~~ meta-policy solver & Nash \\

\hline
\emph{PSRO$_{rN}$} & \\
~~~~~ Episodes for each BR training & $2\times10^{4}$ \\
~~~~~ meta-policy solver & rectified Nash \\

\hline
\emph{P-PSRO} & \\
~~~~~ Episodes for each BR training & $2\times10^{4}$ \\
~~~~~ meta-policy solver & Nash \\
~~~~~ number of threads  & 3 \\

\hline
\emph{PSD-PSRO} & \\
~~~~~ Episodes for each BR training & $2\times10^{4}$ \\
~~~~~ meta-policy solver & Nash \\
~~~~~ diversity weight $\lambda$ & 0.1  \\

\bottomrule
\end{tabular}
\end{center}
\end{table}

\subsection{Goofspiel}
In game theory and artificial intelligence,
Goofspiel is commonly used as an example of a multi-stage simultaneous move game.
In two-player Goofspiel, each player has one suit of cards, which is ranked $A$ (low), $2$, ..., $10$, $J$, $Q$, $K$ (high).
Another suit of cards serves as the prize (competition cards).
Play proceeds in a series of rounds. 
The players make sealed bids for the top (face up) prize by selecting a card from their hand (keeping their choice secret from their opponent).
Once these cards are selected, they are simultaneously revealed, and the player making the highest bid takes the competition card. 
If tied, the competition card is discarded.
The cards used for bidding are discarded, and play continues with a new upturned prize card.
After 13 rounds, there are no remaining cards and the game ends. Typically, players earn points equal to the sum of the ranks of cards won (i.e. $A$ is worth one point, 2 is worth two points, etc., $J$ 11, $Q$ 12, and $K$ 13 points). 
Goofspiel demonstrates high non-transitivity.
Any pure strategy in this game has a simple counter-strategy where the opponent bids one rank higher, or as low as possible against the King bid. As an example, consider the strategy of matching the upturned card value mentioned in the previous section. The final score will be 78 - 13 with the $K$ being the only lost prize. 
we focus on the simple versions of this game: goofspiel with 5 point cards and 8 point cards settings.

Similar to the setting in Leduc poker, we compare PSD-PSRO with P-PSRO, PSRO$_{rN}$, and PSRO in this benchmark. 
We implement the PSRO framework with Nash solver, using DQN as the Oracle agent. We use Linear Programming to solve NE from the payoff table. 
As the game size of Goofspiel is larger than Leduc poker, we improve the capacity of the model by setting the hidden dimension in MLP to be 512 and the episodes for each BR training to be $3\times10^{4}$. 
Other settings are similar to Leduc poker in Table \ref{table:hyperparam_leduc}.

\end{document}